\DeclarePairedDelimiter\floor{\lfloor}{\rfloor}
\newtheorem{proposition}{Proposition}
\newtheorem{lemma}{Lemma}
\newtheorem{corollary}{Corollary}
\newtheorem{theorem}{Theorem}
\newcommand{\defeq}{\mathrel{\mathop:}=}
\begin{document}
\title{Asymptotic Approximations for TCP Compound}
\author{Sudheer Poojary \qquad Vinod Sharma \\ Department of ECE, Indian Institute of Science, Bangalore \\ Email: \{sudheer, vinod\}@ece.iisc.ernet.in}
\maketitle

\begin{abstract}
\label{sec:abstract}
In this paper, we derive an approximation for throughput of TCP Compound connections under random losses.  Throughput expressions for TCP Compound under a deterministic loss model exist in the literature. These are obtained assuming the window sizes are continuous, i.e., a fluid behaviour is assumed. We validate this model theoretically. We show that under the deterministic loss model, the TCP window evolution for TCP Compound is periodic and is independent of the initial window size. We then consider the case when packets are lost randomly and independently of each other. We discuss Markov chain models to analyze performance of TCP in this scenario. We use insights from the deterministic loss model to get an appropriate scaling for the window size process and show that these scaled processes, indexed by $p$, the packet error rate, converge to a limit Markov chain process as $p$ goes to $0$. We show the existence and uniqueness of the stationary distribution for this limit process. Using the stationary distribution for the limit process, we obtain approximations for throughput, under random losses, for TCP Compound when packet error rates are small. We compare our results with ns2 simulations which show a good match. 
\end{abstract}
\section{Introduction}
\label{sec:introduction}
In the last few years, traditional TCP congestion control algorithms, viz., TCP Reno\cite{rfc2581}, TCP New Reno\cite{rfc2582} are being superseded by high speed TCP variants \cite{Yang2014}. The reason for this is the poor link utilization by traditional TCP variants over high-speed, large delay networks and networks subject to random losses \cite{Huston2006, rfc3649}. Traditional TCP variants,  use AIMD algorithm with fixed rates of increase and rates of decrease.  For high bandwidth delay product links, these rates are too conservative and lead to inefficient utilization. Also, TCP cannot distinguish between congestion and non-congestion losses, hence it has poor performance over wireless links. Since the high speed TCP variants use more aggressive increase rates, they perform more efficiently than the traditional TCP on high speed, large delay networks. Also, as a consequence of their aggressive increase rates, these protocols are able to recover quickly from a non-congestion loss and therefore perform better than traditional TCP in wireless environments. 

High speed TCP\cite{rfc3649}, H-TCP\cite{Leith2004}, BIC\cite{Xu2004}, CUBIC\cite{Ha2008}, Scalable\cite{Kelly2003}, FAST\cite{Wei2006TON} and Compound\cite{Tan2006Infocom} are some examples of high speed TCP congestion control algorithms. High speed TCP uses AIMD algorithm but with variable increase and decrease factors which are aggressive when window sizes are large and conservative when window sizes are small. BIC uses binary search to search for the optimum sending rate. H-TCP and CUBIC adjust their window sizes based on time elapsed since last congestion and increase the window size more rapidly as this time increases. Scalable TCP uses MIMD congestion control which is more aggressive than AIMD. FAST and Compound are delay-based TCP algorithms. The TCP Compound window has two components, a delay-based component and a loss-based component. When competing with loss-based congestion control algorithms, delay-based protocols may not be able to get their fair share of network capacity. The loss-based component of TCP Compound ensures that these flows get a fair share of the network capacity.

We now give a brief overview of literature on analytical models for TCP performance analysis.  There exists enormous literature for analysis of the traditional variants of TCP. In \cite{Padhye2000}, the authors derive an expression for the steady state throughput of long lived TCP Reno flows as a function of the loss rates and the RTT of the flows. In \cite{Lakshman1997}, the authors study the performance of TCP Reno on high bandwidth delay product networks under random losses. They find that when multiple flows share a bottleneck queue, connections with higher round trip times receive lesser throughput. In \cite{Cardwell2000}, the authors study the effect of connection establishment, slow start and losses on the TCP Reno data transfer latencies. In \cite{Gupta2006}, the authors analyze performance of persistent and ON-OFF TCP Reno connections through a bottleneck queue with UDP traffic and compute mean file download times. In \cite{Sharma2004}, the authors prove stability of TCP Tahoe and Reno with drop-tail and RED queues in tandem which also carry UDP traffic. In \cite{Shorten2007}, the authors use a random matrix model to describe the performance of multiple AIMD TCP flows through a drop-tail queue. In \cite{Shakkottai2002}, the authors model congestion control algorithms using deterministic fluid model approximations for a system with large number of flows. In \cite{Misra2000}, the authors develop a fluid model for window size evolution of TCP Reno flows going through a RED queue.  

There are fewer analytical studies of TCP Compound as compared to the traditional AIMD variants. In \cite{Ghosh2014, Chavan2015, Manjunath2015}, the authors study the performance of TCP Compound using control theoretic techniques and derive stability conditions for TCP Compound. In these papers, it is shown that when multiple TCP flows share a single bottleneck queue, the queue sizes and the link utilization has oscillatory behaviour when the feedback delays (round trip times of the flows) and the buffer sizes are large. In \cite{Ghosh2014}, the authors evaluates TCP Compound performance as a function of buffer size in the bottleneck queue. In \cite{Chavan2015}, the authors study the performance of TCP Compound with a proportional integral enhanced queue management policy whereas in \cite{Manjunath2015}, RED queue management policy is considered. In \cite{Blanc2009}, the authors study the performance of a single TCP Compound connection in the presence of random losses using Markovian models. In \cite{Sudheer2013}, we develop Markov models for TCP Compound under random losses, also considering the case when queuing delays are non-negligible. These results were then used to compute TCP Compound performance in a heterogeneous network with multiple queues when competing against different TCP variants (TCP CUBIC and TCP Reno). 

We now briefly describe the contribution and the motivation for this paper. Throughput expressions for Compound connections have been derived earlier using a deterministic loss model in \cite{Tan2006Infocom}. Throughput for TCP Compound connections has also been computed using Markov chains in \cite{Sudheer2013, Blanc2009}. The deterministic loss model gives us an explicit expression, whereas the Markov chain models only give numerical results. In this paper, we address this drawback of the Markov chain model and give an approximation for TCP throughput under random losses. We first discuss the deterministic loss model and then give theoretical justification for it. These models rely on the existence of a globally stable fixed point for equation \eqref{eqn:fixed_point_form} discussed in Section \ref{sec:fluidmodel}. While \cite{Tan2006Infocom} implicitly shows that a unique fixed point exists for \eqref{eqn:fixed_point_form} for TCP Compound, we show that the fixed point is globally stable. This implies that under the deterministic loss model, the asymptotic TCP window evolution is periodic and does not depend on the initial window size. For the Markov chain model, we consider appropriately scaled window size processes for TCP and derive results for the limiting process (as the packet error rates, $p \rightarrow 0$) and develop an approximation for TCP throughput for small packet error rates. The deterministic loss model is used to get the correct scaling. This approach has earlier been used in \cite{Dumas2002} where they develop an approximation for TCP throughput for TCP Reno. While the earlier Markov chain models in \cite{Sudheer2013, Blanc2009} requires solving for the stationary distribution of the window size process for each $p$, the approach discussed in this paper requires us to solve for the stationary distribution of a single Markov chain.
We choose TCP Compound for analysis as it is a widely used congestion control algorithm. In \cite{Yang2014}, it has been observed that out of $30000$ servers chosen, $14.6-25.6\%$ used TCP Compound. Also, TCP Compound is available as an option on Microsoft Windows systems and is used by Microsoft Windows servers.

The organization of the paper is as follows. In Section \ref{sec:systemmodel}, we describe our system model. In Section \ref{sec:fluidmodel}, we describe the deterministic loss model and show that the window evolution under this model is eventually periodic. In Section \ref{sec:markovchain}, we consider a scaled window size process. We show that with appropriate scaling, starting from a window size $x$, the limit for the scaled time between losses, as $p \rightarrow 0$ converges to a random variable. We then show that the sequence of the limiting window size process at drop epochs is a Markov chain with a unique stationary distribution. The stationary distribution of these limit processes is then used to derive an approximation for the TCP Compound throughput for small packet error rates. In Section \ref{sec:simulation_results}, we compare our approximation with ns2 simulations. Section \ref{sec:conclusion} concludes the paper

\section{System model for TCP Compound}
\label{sec:systemmodel}
We illustrate our system model in Figure \ref{fig:singleTCP}. We consider a single TCP connection with constant RTT (round trip time), i.e., negligible queuing with random losses. A packet in a TCP window can be dropped with probability $p$ independently of other packets. This is a commonly made assumption, especially for wireless links \cite{Padhye2000}, \cite{Lakshman1997}. We will compute an approximation for TCP Compound average window size under these assumptions in Section \ref{sec:markovchain}.
\begin{figure}
\centering
\includegraphics[scale=0.6]{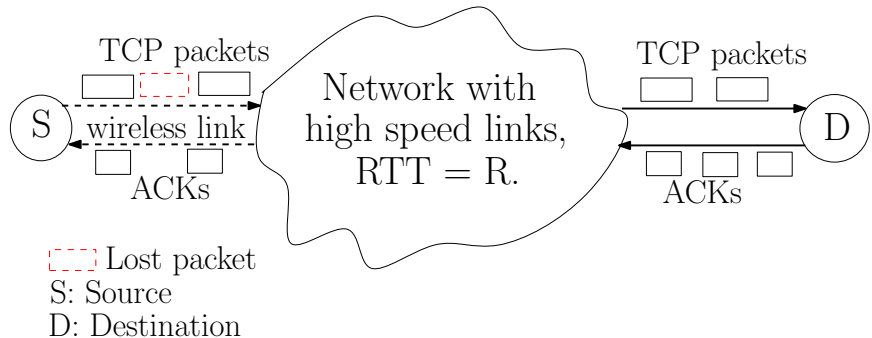}
\caption{Single TCP with fixed RTT}
\label{fig:singleTCP}
\end{figure}

The TCP Compound window size has two components: a loss-based component and a delay based component. Let $W_n$ denote the window size of TCP Compound at the end of $n^{th}$ round trip time (RTT). Let $D_n$ and $L_n$ denote the delay-based and loss-based components respectively, where $W(n)=$ $L_n + D_n$. Then the TCP Compound window size evolution is given by 
\begin{equation}
\centering
\label{eqn:Dn}
D_{n+1} = 
\begin{cases}
D_{n} + (\alpha (W_{n})^{k} - 1)^{+}, \\ \hspace*{0.5cm}\text{ if no loss in $n^{th}$ RTT and } Q_{n+1} < \gamma; \\
(D_{n} - \zeta Q_{n+1})^{+}, \\ \hspace*{0.5cm} \text{ if no loss in $n^{th}$ RTT and } Q_{n+1} \geq \gamma; \\
(1 - \beta) D_{n}, \text{ if loss is detected};
\end{cases}
\end{equation}

\begin{equation}
\centering
\label{eqn:Ln}
L_{n+1} = 
\begin{cases}
L_{n} + 1, \text{ if no loss in $n^{th}$ RTT}; \\
(1 - \beta) L_{n}, \text{ if a loss is detected;}
\end{cases}
\end{equation}
here $\alpha$, $\beta$ and $k$ are TCP Compound parameters, whereas $\gamma$ is a queuing threshold. When there is no queuing, the delay-based component is more aggressive as compared to TCP Reno and helps TCP Compound flows achieve higher utilization over high speed links. When there is queuing, the loss-based component ensures a worst-case TCP Reno-like behaviour.

In Section \ref{sec:fluidmodel}, we briefly describe the approximation for TCP Compound average window size under a deterministic loss model. The insights from the deterministic loss model are then used to derive the approximation for TCP Compound average window size under random losses in Section \ref{sec:markovchain}.
\section{Deterministic loss models for TCP}
\label{sec:fluidmodel}
Fluid models disregard the discrete nature of TCP window size and treat the TCP window size as a fluid, i.e., the window size now takes values in $(0,\infty)$. Such an assumption allows us to use tools such as differential equations which makes analysis of fluid models simpler. The reference \cite{Mills2011} gives a survey of deterministic fluid models which compute the TCP response function. The response function for TCP is an expression for TCP throughput as a function of the system parameters viz., RTT, packet error rate. Such approximations have been used by \cite{Ha2008}, \cite{Tan2006Infocom} and \cite{Mathis1997} to get expressions for average window size for TCP CUBIC, Compound and Reno respectively. We briefly discuss these models and then show that under the deterministic loss model, the window size evolution is asymptotically periodic and independent of the initial window size. This validates the TCP Compound throughput approximation in \cite{Tan2006Infocom}. We use the insights from the deterministic loss model to appropriately scale the TCP Compound window evolution process and derive an approximation for throughput under random losses in Section \ref{sec:markovchain}. 

\subsection{Fluid model description}
\label{sec:fluidmodel_description}
Consider a single TCP connection with RTT $R$, and suppose the packets are dropped independently of each other with probability $p$. The average number of packets sent between two consecutive losses is $\frac{1}{p}$. Let $W(t)$ be the window size at time $t$. We now consider a deterministic process, $\hat{W}(t)$ to approximate the TCP window evolution. The evolution of $\hat{W}(t)$ process is similar to the evolution of the $W(t)$ process between losses. However, unlike $W(t)$, $\hat{W}(t)$ is not integer-valued. Also the process $\hat{W}(t)$ is subject to losses every $\frac{1}{p}$ packets and hence undergoes a reduction every $\frac{1}{p}$ packets. Suppose $\hat{W}(0) = x$, let $\tau_p(x)$ denote the time taken by the $\hat{W}(t)$ process to send $\frac{1}{p}$ packets. At $t = \tau_p(x)$, $\hat{W}(t)$ undergoes a window reduction and $\hat{W}(\tau_p(x)^+) =  (1 - \beta)\hat{W}(\tau_p(x))$, where $\beta$ is the multiplicative drop factor for TCP. Next, the window size $\hat{W}(t)$ evolves as before but now with initial window size, $\hat{W}(\tau_p(x)^+)$. Now at time $t = \tau_p(x) + \tau_p(\hat{W}(\tau_p(x)^+))$, $\hat{W}(t)$ process undergoes another loss reducing its window size. This process continues so that $\hat{W}(t)$ undergoes losses every $\frac{1}{p}$ packets and in between the loss epochs the evolution is similar to $W(t)$ process without losses. Thus the deterministic loss model has a loss rate of $p$.

Suppose there exists a $x^*_p$ such that $\hat{W}(\tau_p(x^*_p)^+) = x^*_p$, i.e., the fixed point equation
\begin{equation}
\label{eqn:fixed_point_form}
\hat{W}(\tau_p(x)^+) = x,
\end{equation}
has a unique solution. Then, if we start from $x^*_p$, the process $\hat{W}(t)$ will have a periodic behaviour with period $\tau_p(x^*_p)$ and $\hat{W}(t) \in [x^*_p, \frac{x^*_p}{1-\beta}]$.  The long time average for the process $\hat{W}(t)$ is then given by
\begin{equation}
\label{eqn:generic_rf}
 \frac{1}{\tau_p(x^*_p)} \int_{0}^{\tau_p(x^*_p)}  \hat{W}(t) dt,
\end{equation}
with $\hat{W}(0) = x^*_p$. Expression \eqref{eqn:generic_rf}, is then used to approximate the time stationary window size $E[W(p)]$ of corresponding window size $W(t)$ process. Using the above model, \cite{Tan2006Infocom} derives the following approximations for average window size for TCP Compound.
\begin{equation}
\label{eqn:compound_rf}
E[W(p)] \approx \frac{(1 - k) (\frac{\alpha}{p})^{\frac{1}{2-k}} (1 - (1 -\beta)^{(2-k)})^{\frac{1-k}{2-k}}}{(1 - (1 -\beta)^{(1-k)}) (2 - k)^{\frac{1-k}{2-k}}}.
\end{equation}
The throughput of the TCP connection is given by $\frac{E[W(p)]}{R}$.

The above expression is valid, for any initial condition at $t=0$, if the fixed point equation, \eqref{eqn:fixed_point_form} is globally stable. We prove this for TCP Compound in Proposition \ref{prop:CTCP}. We ignore the slow start phase and ignore that there may be an upper bound on the maximum window size. This assumption is implicitly made in the references \cite{Ha2008}, \cite{Tan2006Infocom} and \cite{Mathis1997}.

\subsection{TCP Compound Fluid model}
\label{sec:tcpCompoundFluid}
For TCP Compound, the window evolution, when there is no loss, with fixed RTT and no queuing delay is given by,
\begin{equation}
\label{eqn:CTCP_fixedRTT}
W(t + R) = W(t) + \alpha W(t)^k,
\end{equation}
where $R$ is the RTT for the connection. Therefore, for the deterministic fluid model $\hat{W}(t)$, the window size between losses can be approximated as solution to the following ODE,
\begin{equation}
\label{eqn:CTCP_ode}
\frac{d\hat{W}}{dt} = \frac{\alpha \hat{W}^k}{R},
\end{equation}
which is 
\begin{equation}
\frac{\hat{W}(t)^{1-k}}{1-k} = \frac{\alpha t}{R} + c,
\end{equation}
where c depends on the initial conditions of \eqref{eqn:CTCP_ode}. For the process, $\hat{W}(t)$, we have  
\begin{proposition}
\label{prop:CTCP}
For any fixed $p \in (0,1)$, there exists a unique $x$ (denoted by $x^*_p$) such that $\hat{W}(\tau_p(x)^+) = x$, i.e., the function $\hat{W}(\tau_p(.)^+)$ has a unique fixed point. For any $x \geq 1$ such that $\hat{W}(0) = x$, $\hat{W}(t)$ converges monotonically to $x^*_p$ at drop epochs. 
\end{proposition}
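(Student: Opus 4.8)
The plan is to turn the implicitly defined map $x\mapsto\hat W(\tau_p(x)^+)$ into an explicit one-dimensional map, read off its fixed point, and show the iterates converge by linearising the map through a change of variable. \textbf{Step 1 (explicit drop-to-drop map).} In the fluid picture the instantaneous packet rate is $\hat W(t)/R$, so the number of packets sent on $[0,t]$ is $\tfrac1R\int_0^t\hat W(s)\,ds$ and $\tau_p(x)$ is defined by $\tfrac1R\int_0^{\tau_p(x)}\hat W(s)\,ds=\tfrac1p$. Since $k\in(0,1)$, the solution of \eqref{eqn:CTCP_ode} with $\hat W(0)=x$ is $\hat W(s)=\big(x^{1-k}+\tfrac{(1-k)\alpha s}{R}\big)^{1/(1-k)}$, which exists for all $s\ge0$ and is strictly increasing and unbounded; hence $s\mapsto\tfrac1R\int_0^s\hat W$ is strictly increasing, continuous and unbounded, so $\tau_p(x)$ exists, is finite and is unique for every $x>0$. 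Evaluating the integral (substitute $u=x^{1-k}+\tfrac{(1-k)\alpha s}{R}$) and solving for $\tau_p(x)$ gives $\hat W(\tau_p(x))=\big(x^{2-k}+\tfrac{(2-k)\alpha}{p}\big)^{1/(2-k)}$, so, writing $F(x)\defeq\hat W(\tau_p(x)^+)=(1-\beta)\hat W(\tau_p(x))$,
\[
F(x)=(1-\beta)\Big(x^{2-k}+\tfrac{(2-k)\alpha}{p}\Big)^{1/(2-k)}.
\]

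\textbf{Step 2 (fixed point).} Raising $F(x)=x$ to the power $2-k$ yields the linear equation $x^{2-k}\big(1-(1-\beta)^{2-k}\big)=(1-\beta)^{2-k}\tfrac{(2-k)\alpha}{p}$; since $\beta\in(0,1)$ and $k\in(0,1)$ force $0<(1-\beta)^{2-k}<1$, it has the single positive root $x^*_p=\big(\tfrac{(1-\beta)^{2-k}(2-k)\alpha}{p\,(1-(1-\beta)^{2-k})}\big)^{1/(2-k)}$, which is therefore the unique fixed point of $F$. \textbf{Step 3 (monotone convergence).} Let $x_0=x$ and $x_{n+1}=F(x_n)$ be the window just after the $(n{+}1)$-st loss (the corresponding pre-drop windows being $x_{n+1}/(1-\beta)$). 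Setting $y_n\defeq x_n^{2-k}$ — an order-preserving change of variable since $t\mapsto t^{2-k}$ is strictly increasing — the recursion becomes the affine one $y_{n+1}=a\,y_n+b$ with $a=(1-\beta)^{2-k}\in(0,1)$ and $b=(1-\beta)^{2-k}\tfrac{(2-k)\alpha}{p}>0$. Then $y_n-y^*=a^n(y_0-y^*)$ with $y^*=b/(1-a)=(x^*_p)^{2-k}$, so $y_n\to y^*$ monotonically (non-increasingly if $y_0\ge y^*$, non-decreasingly if $y_0\le y^*$), and inverting the change of variable gives $x_n\to x^*_p$ monotonically, which is the asserted convergence at drop epochs. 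The hypothesis $x\ge1$ is not actually used; it merely keeps the initial window physically meaningful.

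The calculations are elementary throughout; the one idea that does the work is noticing that $y=x^{2-k}$ converts the drop-to-drop map into an affine contraction, after which global stability is just the textbook fact about affine recursions with slope in $(0,1)$. The step requiring the most care is setting up and evaluating the packet-count integral that defines $\tau_p(x)$, and checking that the $R$-dependence cancels so that $F$ depends only on $\alpha,\beta,k,p$.
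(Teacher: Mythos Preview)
Your proof is correct and follows essentially the same route as the paper: both compute the packet-count integral to obtain $\hat W(\tau_p(x))^{2-k}=x^{2-k}+\tfrac{(2-k)\alpha}{p}$ and then solve the resulting equation for the unique fixed point $x^*_p$. The only difference is in the convergence step: the paper uses the same affine relation in the variable $x^{2-k}$ (its equation \eqref{eqn:CTCP_eqn5}) but argues qualitatively that $x<x^*_p$ implies $x<F(x)<x^*_p$ (and symmetrically above), concluding by monotone-bounded convergence, whereas you iterate the affine recursion $y_{n+1}=ay_n+b$ explicitly to obtain $y_n-y^*=a^n(y_0-y^*)$. Your version is a touch cleaner and additionally yields the geometric rate $(1-\beta)^{2-k}$, but the underlying idea is the same.
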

\begin{proof}

\subsubsection*{Existence of $x^*_p$}
Suppose that at time $t = 0$, the flow experienced a loss and $\hat{W}(0^+) = x$. Then,
\begin{equation}
\label{eqn:CTCP_W}
\frac{\hat{W}(t)^{1-k}}{1-k} = \frac{\alpha t}{R} + \frac{\hat{W}(0^+)^{1-k}}{1-k},
\end{equation}
assuming no further losses in $(0,t]$. Let $\tau_p(x)$ be the first drop epoch after $t = 0$, so that $\frac{1}{p}$ packets are sent by the $\hat{W}(t)$ process in $(0,\tau_p(x)]$. Then we have 
\begin{equation}
\label{eqn:CTCP_eqn1}
\begin{aligned}
\frac{1}{p} &= \frac{1}{R}\int_{0}^{\tau_p(x)} \hat{W}(t) dt \\
&= \frac{1}{R}\int_{t = 0}^{\tau_p(x)} \hat{W}(t)^{1-k} \frac{R}{\alpha} d\hat{W},  \hspace*{1cm} \text{from \eqref{eqn:CTCP_ode},} \\
&= \frac{\hat{W}(\tau_p(x))^{2-k} - x^{2-k}}{\alpha (2-k)}.
\end{aligned}
\end{equation}
Solving the fixed point equation \eqref{eqn:fixed_point_form} now requires us to find an $x$ such that
\begin{equation}
\label{eqn:fixed_point_form_ctcp}
\hat{W}(\tau_p(x)^+) = x = \hat{W}(0^+),
\end{equation}
so that $\hat{W}(\tau_p(x)) = \frac{x}{1-\beta}$. From \eqref{eqn:CTCP_eqn1} and \eqref{eqn:fixed_point_form_ctcp}, for the fixed point, $x_p^*$, we have
\begin{equation}
\label{eqn:CTCP_eqn2}
\frac{1}{\alpha (2 -k) } \Bigl( \Bigl( \frac{x^*_p}{1 - \beta } \Bigr)^{2-k} - (x^*_p)^{2-k} \Bigr) = \frac{1}{p}.
\end{equation}
Thus the fixed point to \eqref{eqn:fixed_point_form} for TCP Compound fluid model is given by
\begin{equation}
\label{eqn:CTCP_FP_exist}
x^*_p = \Bigl( \frac{\alpha (2 - k) (1 - \beta)^{2-k}}{p(1 - (1 - \beta)^{2-k})} \Bigr)^{\frac{1}{2-k}}.
\end{equation}
Thus, corresponding to every $p$, we have unique $x$ given by \eqref{eqn:CTCP_FP_exist} such that $\hat{W}(\hat{\tau}_p(x)) = \frac{x}{1-\beta}$. Thus we have proved existence of $x^*_p$ such that $\hat{W}(\tau_p(x^*_p)^+) = x^*_p$.

\subsubsection*{Convergence to $x^*_p$}
From \eqref{eqn:CTCP_eqn1} and \eqref{eqn:CTCP_eqn2}, we have
\begin{equation}
\label{eqn:CTCP_eqn5}
\hat{W}(\tau_p(x))^{2-k} = (x^*_p)^{2-k} \Bigl(\Bigl(\frac{1}{1-\beta}\Bigr)^{2-k} - 1 \Bigr) + x^{2-k}.
\end{equation}
Suppose the initial window size, $\hat{W}(0^+)$, at the first loss epoch at $t=0$ is $x$, such that  $0 < x < x^*_p$, then $\frac{x}{1-\beta} < \hat{W}(\tau_p(x)) < \frac{x^*_p}{1 - \beta}$. Thus after window reduction at $\tau_p(x)^+$, we have $ x <  \hat{W}((\tau_p(x))^+) < x^*_p$. Therefore if $\hat{W}(0^+) = x < x^*_p$, $\hat{W}$ monotonically increases to $x^*_p$ at the drop epochs. Similarly, if $x > x^*_p$, then $x^*_p < \hat{W}((\tau_p(x))^+) < x$. Thus, if $\hat{W}(t) = x $ at some drop epoch, the value of $\hat{W}(t)$ at drop epochs converges monotonically to $x^*_p$.
\end{proof}
In Figure \ref{fig:det_evolution_CTCP}, we illustrate the evolution of the window size process, $\hat{W}(t)$, under the deterministic loss model for packet error rate, $p= 0.001$ with initial window sizes, $10$, $100$. In both cases, we see that the window size at the drop epoch converges monotonically to $x^*_p$ and the window size has a periodic evolution eventually.
\begin{figure}
\centering
\includegraphics[scale=0.20, trim = 100 20 0 0, clip=true]{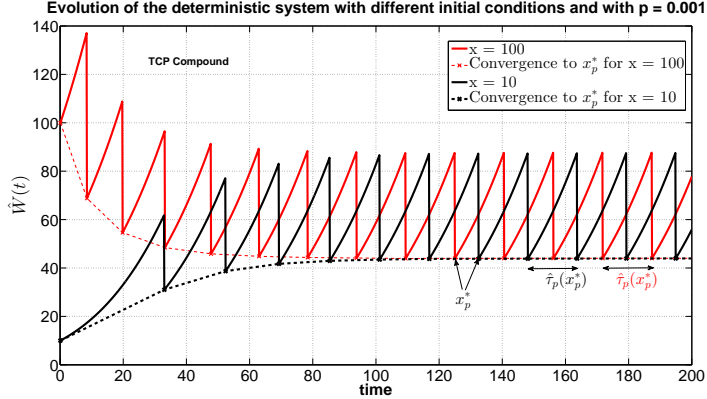}
\caption{Deterministic-loss window evolution with initial conditions, $10, 100$}
\label{fig:det_evolution_CTCP}
\end{figure}

Using equation \eqref{eqn:CTCP_W} in Proposition \ref{prop:CTCP}, for the deterministic loss model, time between losses converges to 
\begin{equation}
\label{eqn:ctcp_time_betn_losses}
\hat{\tau}_p(x^*_p) =  \Bigl( \frac{(1 - (1-\beta)^{1-k}) R}{\alpha (1 - k) (1-\beta)^{1-k}} \Bigr) (x^*_p)^{1-k}.
\end{equation}
From equation \eqref{eqn:CTCP_FP_exist}, we see that for the deterministic loss model, the window size at the loss epochs converges to $C_1 p^{-\frac{1}{2-k}}$, where $C_1$ is some constant dependent on the TCP Compound parameters. Also from equation \eqref{eqn:ctcp_time_betn_losses}, the time between losses converges to $C_2 p^{-\frac{1-k}{2-k}}$ where $C_2$ is some constant dependent on the TCP Compound parameters. These are key observations which we use in the next section to get appropriate scaling for the TCP Compound window size process under random losses.
\section{Throughput approximation with random losses}
\label{sec:markovchain}

In this section, we assume that the packets of the connection are subject to random losses due to channel errors. Also assume that each packet in a window is lost independently of other packets with probability $p$. We can then model the window evolution as a Markov chain. The results obtained using this Markov chain are presented in \cite{Sudheer2013}. In \cite{Sudheer2013}, we assume that the maximum window size is restricted by $W_{max} < \infty$. With a maximum window size restriction, it is easy to see that as $p \rightarrow 0$, the steady state distribution of the Markov chain $\{W_n\}$ converges to a unit mass on $W_{max}$.  In this paper, we derive limiting results for this Markov chain as $p$ goes to zero. We show that with an appropriate scaling and letting $W_{max} = \infty$, approximations for distributions for time between losses, window size at loss epochs and finally time average window size can be obtained for small $p$.

Consider a single TCP Compound connection with fixed RTT $R$, whose window evolution is given by \eqref{eqn:Dn} and \eqref{eqn:Ln}. Let $W_n(p)$ be the window size at the end of the $n^{th}$ RTT. Let $V_k(p)$ denote the window size at the $k^{th}$ loss epoch (just after loss) and let $G^p_{V_k(p)}$ denote the time between the $k^{th}$ and $(k+1)^{st}$ loss. In Proposition \ref{prop:ergodicity} and Corollary \ref{coro:ergodicity}, we show  that the processes $\{W_n(p)\}$, $\{V_k(p)\}$ and $\{G^p_{V_k(p)}\}$ have unique stationary distributions.

\begin{proposition}
\label{prop:ergodicity}
For any $p \in (0,1)$, the Markov chain $\{W_n(p)\}$ has a single aperiodic positive recurrent communicating class and the remaining states are transient. Hence it has a unique stationary distribution. Also, for fixed $p\in(0,p)$, $E[W^\eta]$ is finite under stationarity for all $\eta \geq 1$. 
\end{proposition}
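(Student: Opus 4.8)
The plan is to combine a Foster--Lyapunov drift argument, which simultaneously yields positive recurrence, finiteness of all polynomial moments, and aperiodicity of the recurrent class, with an explicit reachability analysis that isolates the single communicating class. Throughout, write $w=d+\ell$ for a state $(D_n,L_n)=(d,\ell)$ and note that in a window of size $w$ at least one of the $\lceil w\rceil$ packets is lost with probability $q(w)=1-(1-p)^{\lceil w\rceil}\in(0,1)$; on such a loss both components are scaled by $1-\beta$, so $W$ itself is scaled by $1-\beta$, whereas on a no-loss RTT one reads off from \eqref{eqn:Dn}--\eqref{eqn:Ln} that $W_{n+1}\le W_n+\alpha W_n^{k}+1$.

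\textbf{Step 1 (geometric drift).} Fix $\eta\ge1$ and let $V(w)=w^{\eta}$. Conditioning on whether a loss occurs in the $n$-th RTT,
\[
E[V(W_{n+1})\mid W_n=w]\;\le\;q(w)\,(1-\beta)^{\eta}w^{\eta}\;+\;\bigl(1-q(w)\bigr)\bigl(w+\alpha w^{k}+1\bigr)^{\eta}.
\]
Since $k<1$ we have $(w+\alpha w^{k}+1)^{\eta}\le 2w^{\eta}$ for all large $w$, while $1-q(w)=(1-p)^{\lceil w\rceil}$ decays geometrically in $w$, so the second term is $o(w^{\eta})$; as $q(w)\to1$, the right-hand side is therefore at most $\rho_{\eta}w^{\eta}$ for all $w$ outside a bounded set $C_{\eta}$, with $\rho_{\eta}$ chosen strictly between $(1-\beta)^{\eta}$ and $1$. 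On $C_{\eta}$ the one-step window increase is bounded, so $E[V(W_{n+1})\mid W_n=w]$ is bounded there. Hence
\[
E[V(W_{n+1})\mid W_n=w]\;\le\;\rho_{\eta}V(w)+b_{\eta}\,\mathbf{1}_{C_{\eta}}(w)
\]
for a finite constant $b_{\eta}$ and a bounded set $C_{\eta}$ (which, on the set of states reachable by the dynamics, is finite).

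\textbf{Step 2 (communicating class and aperiodicity).} With $\eta=1$, Step 1 shows the chain is non-explosive and that $C_1$ is entered infinitely often from every initial state. Let $s^{\ast}$ be the minimal-window state in $C_1$ (window at its floor, $D=0$). From any state a long enough run of losses --- each of positive probability and each multiplying $W$ by $1-\beta$ --- drives the window to its floor and hence to $s^{\ast}$, so $s^{\ast}$ is reachable from every state; conversely every state produced by the dynamics is reachable from $s^{\ast}$. Thus the states reachable from $s^{\ast}$ form a single communicating class, which is positive recurrent by the drift bound of Step 1. A loss at $s^{\ast}$ leaves the chain at $s^{\ast}$ (the reduced window is floored back up and $D$ stays $0$) and occurs with probability $q(w^{\ast})>0$, so this class is aperiodic.

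\textbf{Step 3 (transience, uniqueness, moments).} Any state outside the class of $s^{\ast}$ reaches $C_1$, hence $s^{\ast}$, almost surely, after which the chain remains in that class forever; such a state is visited only finitely often and is therefore transient. A positive recurrent aperiodic communicating class supports a unique stationary distribution $\pi$, which gives the first claim. For the moment bound, apply Step 1 with arbitrary $\eta\ge1$: the standard consequence of a Foster--Lyapunov inequality (the comparison theorem of Meyn--Tweedie, or integrating the drift inequality against $\pi$ after truncation) gives $E_{\pi}[W^{\eta}]=\pi(V)\le b_{\eta}/(1-\rho_{\eta})<\infty$.

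\textbf{Expected main obstacle.} The real work is in Step 1: pushing the two-component recursion \eqref{eqn:Dn}--\eqref{eqn:Ln} through the estimate while ensuring the exceptional set $C_{\eta}$ is genuinely bounded, and absorbing the irregular small-window behaviour (where $(\alpha W_n^{k}-1)^{+}$ vanishes and the component flooring matters) into $C_{\eta}$ rather than letting it degrade the geometric drift. Once one notices that a loss multiplies $W$ by the constant factor $1-\beta$, so that only a bounded number of losses suffices to reach $s^{\ast}$ from any bounded window, the reachability argument in Step 2 and the transience argument in Step 3 are routine.
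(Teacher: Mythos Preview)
Your proposal is correct and follows essentially the same route as the paper: both arguments identify a bottom state (the paper's state $(1)$, your $s^{\ast}$) reachable from everywhere via a run of losses, use its self-loop for aperiodicity, and establish positive recurrence and finiteness of $E_{\pi}[W^{\eta}]$ via a Foster--Lyapunov/Tweedie drift bound on $w\mapsto w^{\eta}$. The only organizational difference is that the paper first checks the linear drift $E[W_{n+1}-W_n\mid W_n=j]<-\epsilon$ for positive recurrence and then separately verifies the multiplicative condition $E[W_1^{\eta}\mid W_0=i]\le(1-\delta)i^{\eta}$ via Tweedie's Theorem~3, whereas you derive the single geometric drift inequality $E[W_{n+1}^{\eta}\mid W_n=w]\le\rho_{\eta}w^{\eta}+b_{\eta}\mathbf{1}_{C_{\eta}}(w)$ once and read both conclusions off it; this is a mild streamlining but not a different idea.
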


\begin{proof}
The state $(1)$ can be reached from any state in the state space of $\{W_n(p)\}$ process with non-zero probability. Hence, all states that can be reached from the state $(1)$ form a communicating class, whereas the remaining states are transient. Also, the state $(1)$ has a self loop. Hence the communicating class which contains state $(1)$ is aperiodic. We next show that the communicating class containing state $(1)$ is positive recurrent.

For the Markov chain $\{W_n\}$, we have 
\begin{equation}
\label{eqn:ctcp_ergodicity}
\begin{aligned}
E[W_{n+1} - W_n | W_n = j] &= \frac{-j}{2} (1 - (1-p)^j) + \alpha j^k (1 - p)^j, \\
&= \frac{-j}{2} + (1 - p)^j (\alpha j^k + \frac{j}{2}), \\
&\leq \frac{-j}{2} + (1 - p)^j (\alpha  + \frac{1}{2}) j, \\
\end{aligned}
\end{equation}
For $p \in (0,1)$, we can choose a $J$ such that $\forall j > J$ we have $E[W_{n+1} - W_n | W_n = j] < -\epsilon \ $ for some $\epsilon > 0$. Thus by mean drift criteria for positive recurrence, the Markov chain $\{W_n\}$ is positive recurrent.

For proof of $E[W^{\eta}] < \infty$ for all $\eta \geq 1$, we will use Theorem $3$ in \cite{Tweedie1983}. Since $\{W_k\}$ is a countable state irreducible, aperiodic Markov chain, any finite set of the state space is small. Let us take $A = \{1, \cdots M \}$ where $M < \infty$ will be specified later. To show $E[W^{\eta}] < \infty$, it is sufficient to show that 

\begin{equation}
\label{eqn:oneTweedie}
\sup_{i \in A} E[W_1^\eta | W_0 = i] < \infty, 
\end{equation}
and there is a $\delta > 0$ such that
\begin{equation}
\label{eqn:twoTweedie}
E[W_1^\eta | W_0 = i] \leq (1 - \delta) i^\eta, 
\end{equation}
for all $i \in A^c$.
Since, 
\begin{equation}
\label{eqn:threeTweedie}
\begin{aligned}
E[W_1^\eta | W_0 = i] &= (i+\alpha i^k)^\eta (1 - p)^i + (\frac{i}{2})^\eta (1 - (1-p)^i) \\
& \leq i^\eta (1 + \alpha)^\eta + (\frac{i}{2})^\eta (1 - (1-p)^i) \\
& \leq M^\eta (1 + \alpha)^\eta + (\frac{M}{2})^{\eta} < \infty 
\end{aligned}
\end{equation}
for any $i \in A$, \eqref{eqn:oneTweedie} is proved. Also, for any $i > M$, $(1 - p)^i < \frac{1}{4 (1+\alpha)^\eta}$ if $M$ is taken large enough. Then, for $i > M$
\begin{equation}
\begin{aligned}
E[W_1^\eta | W_0 = i] &\leq i^\eta(1 + \alpha)^\eta(1 - p)^i + (\frac{i}{2})^\eta, \\
& \leq \frac{i^\eta}{4} + \frac{i^\eta}{2}
\end{aligned}
\end{equation}
Thus for $i >M$
\begin{equation*}
E[W_1^\eta | W_0 = i] \leq \frac{3}{4} i^{\eta}.
\end{equation*}
\end{proof}

\begin{corollary}
\label{coro:ergodicity}
For any $p \in (0,1)$, the processes $\{V_k(p)\}$ and $\{G^p_{V_k(p)}\}$ have unique stationary distributions. Also, for fixed $p\in(0,p)$, $E[V(p)^\eta]$ is finite under stationarity for all $\eta \geq 1$. 
\end{corollary}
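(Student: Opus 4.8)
\emph{Proof proposal.} The plan is to show that $\{V_k(p)\}$ is itself a countable-state Markov chain with a single positive recurrent aperiodic class and finite stationary moments, paralleling the treatment of $\{W_n(p)\}$ in Proposition \ref{prop:ergodicity}, and then to obtain the claim for $\{G^p_{V_k(p)}\}$ by enlarging the state to the pair $(V_k(p),G^p_{V_k(p)})$.

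First I would observe that $\{V_k(p)\}$ is a Markov chain. Given the post-loss window $V_k(p)=x$, the window grows deterministically according to \eqref{eqn:CTCP_fixedRTT} until the next loss, and in each intervening RTT with current window $w$ a loss is detected independently with probability $1-(1-p)^{w}$; hence the conditional law of $V_{k+1}(p)$ depends on the past only through $V_k(p)$. Its state space is contained in $(1-\beta)$ times the countable state space of $\{W_n(p)\}$, hence countable. For irreducibility and aperiodicity, note that from any state $x$, with probability at least $p$ the next loss occurs in the immediately following RTT, which (since the window only grows between losses) strictly decreases $V$ unless it is already at the minimal value; iterating, $\{V_k(p)\}$ reaches the state $(1)$ from anywhere, and a loss at window $(1)$ leaves it at $(1)$, giving a self-loop. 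Thus, exactly as in the proof of Proposition \ref{prop:ergodicity}, $\{V_k(p)\}$ has a single aperiodic communicating class (containing $(1)$) with the remaining states transient.

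Next I would establish positive recurrence together with the moment bound by a Foster--Lyapunov argument of the same flavour as the one used for $\{W_n(p)\}$, again invoking Theorem $3$ of \cite{Tweedie1983} with Lyapunov function $x\mapsto x^{\eta}$. Conditionally on $V_k(p)=x$, write $V_{k+1}(p)=(1-\beta)\,g_N(x)$, where $g_m$ is the $m$-RTT loss-free window map (with $g_0$ the identity) and $N$ is the number of loss-free RTTs before the next loss. Integrating \eqref{eqn:CTCP_ode} shows $g_m(x)$ grows at most polynomially in $m$ (of degree $1/(1-k)$), while $P(N\ge m)\le (1-p)^{mx}\le (1-p)^{m}$ because $g_i(x)\ge x\ge 1$. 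The geometric tail of $N$ therefore dominates the polynomial growth of $g_m(x)^{\eta}$, which gives both $\sup_{x\le M}E[V_{k+1}(p)^{\eta}\mid V_k(p)=x]<\infty$ for every finite $M$, and, for $x$ large, that the $N=0$ term is dominant:
\[
E\big[V_{k+1}(p)^{\eta}\,\big|\,V_k(p)=x\big]\;=\;(1-\beta)^{\eta}x^{\eta}\big(1-(1-p)^{x}\big)\;+\;O\big(x^{\eta}(1-p)^{x}\big)\;+\;O(1)\;\le\;(1-\delta)\,x^{\eta}
\]
for a suitable $\delta\in(0,1)$ and all $x>M$, since $0<\beta<1$ forces $(1-\beta)^{\eta}<1$ and the correction terms are negligible relative to $x^{\eta}$ once $M$ is large. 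By Theorem $3$ of \cite{Tweedie1983} this yields positive recurrence of $\{V_k(p)\}$ and $E[V(p)^{\eta}]<\infty$ under stationarity for all $\eta\ge 1$.

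Finally, for $\{G^p_{V_k(p)}\}$ I would pass to the pair $(V_k(p),G^p_{V_k(p)})$, which is again a Markov chain: the value $G^p_{V_k(p)}$ fixes the number of loss-free RTTs in the current inter-loss interval, so together with $V_k(p)$ it determines $V_{k+1}(p)$, and $G^p_{V_{k+1}(p)}$ then depends only on $V_{k+1}(p)$ and fresh loss randomness whose tail is sub-geometric uniformly in the state. This chain has the atom $(1,1)$, visited infinitely often with finite mean return time (the marginal chain $\{V_k(p)\}$ visits $(1)$ infinitely often with finite mean return time by the previous step, and at each such visit $G^p_{V_k(p)}=1$ independently with probability at least $p$), so it is positive recurrent with a unique stationary distribution; its $G$-component is then stationary with a unique stationary law, which is simply the law of $G^p_{V}$ with $V$ distributed according to the stationary distribution of $\{V_k(p)\}$.

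The step I expect to be the real obstacle is the drift estimate for $\{V_k(p)\}$: one must carefully bound the contribution of the rare long loss-free excursions to $E[V_{k+1}(p)^{\eta}\mid V_k(p)=x]$ and verify that for large $x$ the immediate-loss term supplies a genuine contraction factor strictly below $1$. The Markov property, the irreducibility/aperiodicity bookkeeping, and the reduction for $\{G^p_{V_k(p)}\}$ should all be routine once the recurrence of $\{V_k(p)\}$ is in hand.
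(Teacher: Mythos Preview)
Your proposal is correct, but it takes a genuinely different route from the paper's proof. The paper does not redo a Foster--Lyapunov analysis for $\{V_k(p)\}$; instead it observes that the successive visits of $\{W_n(p)\}$ to the state $(1)$ are regeneration epochs simultaneously for $\{W_n(p)\}$, $\{V_k(p)\}$ and $\{G^p_{V_k(p)}\}$, and that the number of loss epochs in a regeneration cycle is trivially bounded by the number of RTTs in that cycle, i.e.\ $\tau_V\le\tau_W$. Positive recurrence of $\{V_k(p)\}$ and $\{G^p_{V_k(p)}\}$ then follows immediately from $\mathbb{E}\tau_W<\infty$, which was established in Proposition~\ref{prop:ergodicity}. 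For the moment bound, the paper simply notes that the $\{V_k(p)\}$ process is stochastically dominated by $\{W_n(p)\}$ under stationarity and invokes the moment bound already proved for $W$.

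Your approach buys self-containment: you derive positive recurrence and the moment bound for $\{V_k(p)\}$ from scratch via the drift $E[V_{k+1}^\eta\mid V_k=x]\le(1-\delta)x^\eta$, and your treatment of the moment bound is arguably more transparent than the paper's one-line stochastic-domination claim. The paper's approach buys economy: it reduces everything to Proposition~\ref{prop:ergodicity} with two short observations ($\tau_V\le\tau_W$ and $V\le_{\mathrm{st}} W$), avoiding the need to re-estimate the tail of the loss-free excursion or to enlarge the state space to handle $\{G^p_{V_k(p)}\}$. Your identification of the drift estimate as the ``real obstacle'' is therefore somewhat misplaced---in the paper's argument there is no such obstacle, because no new drift computation is needed.
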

\begin{proof}
For the process $\{W_n(p)\}$, consider the inter-visit times to state $1$. These epochs are regeneration epochs for the process $\{W_n(p)\}$ as well as for the $\{V_k(p)\}$ and $\{G^p_{V_k(p)}\}$ processes. From Proposition \ref{prop:ergodicity}, the process $\{W_n(p)\}$ is positive recurrent. Hence, with visits to state $1$  as regeneration epochs, the average regeneration cycle length, $\mathbb{E}\tau_{W}$, (for the $\{W_n(p)\}$ process) is finite. The regeneration cycle length for $\{V_k(p)\}$ and $\{G^p_{V_k(p)}\}$ processes (denoted by $\tau_{V}$) is given by the number of loss epochs between two consecutive visits to state $1$. Since in each regeneration cycle, $\tau_{V} \leq \tau_{W}$, we get, $\mathbb{E}\tau_{V} < \infty$. Therefore, the Markov chain $\{V_k(p)\}$ is positive recurrent and the processes $\{V_k(p)\}$ and $\{G^p_{V_k(p)}\}$  have unique stationary distributions. Also, since the process $\{V_k(p)\}$ is stochastically dominated by $\{W_n(p)\}$ process, the finiteness of $E[V(p)^\eta]$, under stationarity, follows from Proposition \ref{prop:ergodicity}.
\end{proof}

We now derive an approximation for the TCP throughput under random losses. Consider a single TCP Compound connection with fixed RTT $R$, whose window evolution is given as solution to the ODE \eqref{eqn:CTCP_ode} given in Section \ref{sec:fluidmodel}. The solution to this ODE with initial window size $x$ is given by 
\begin{equation}
\label{eqn:CTCP_Wt}
W(t) = \Bigl(x^{1-k} + \frac{\alpha (1 - k) t}{R}\Bigr)^\frac{1}{1-k}.
\end{equation}
Let $G_x^p$ denote the time for first packet loss (in multiples of $R$) when the initial window size is $x$ and the packet error probability is $p$. Consider $p^{\frac{1-k}{2-k}} G_{\floor{\frac{x}{p^{\frac{1}{2-k}}}}}^p$, which denotes the product of  $p^{\frac{1-k}{2-k}}$ with time for first packet loss when the initial window size is $\floor{\frac{x}{p^{\frac{1}{2-k}}}}$. The choice of parameters $p^{\frac{1-k}{2-k}}$ and $p^{\frac{1}{2-k}}$ is motivated from the deterministic loss model in Section \ref{sec:fluidmodel}. In Section \ref{sec:fluidmodel}, it was shown that under deterministic losses which happen every $\frac{1}{p}$ packets, the average time between losses is inversely proportional to $p^{\frac{1-k}{2-k}}$  whereas the time average window size is inversely proportional to $p^{\frac{1}{2-k}}$ (see equations \eqref{eqn:CTCP_FP_exist} and \eqref{eqn:ctcp_time_betn_losses}). In Proposition \ref{prop:Gbarx_CTCP1}, we show that the term   $p^{\frac{1-k}{2-k}} G_{\floor{\frac{x}{p^{\frac{1}{2-k}}}}}^p$ converges to a random variable $\overline{G}_x$ as $p \rightarrow 0$ for all $x \geq 1$.

\begin{proposition}
\label{prop:Gbarx_CTCP1}
For $x \geq 1$, $p^{\frac{1-k}{2-k}} G_{\floor{\frac{x}{p^{\frac{1}{2-k}}}}}^p$ converges in distribution to a random variable $\overline{G}_x$, where the complementary cdf of $\overline{G}_x$ is given by
\begin{equation}
\label{eqn:Gbarx_CTCP2}
\begin{split}
\mathbb{P}(\overline{G}_x \geq y) = \exp \Bigl(-   & xy - 2\alpha (1 -k) x^k y^2 - 2 \alpha^2 (1-k)^2 x^{2k-1} y^3 \\
& - \alpha^3 (1 -k)^3 x^{3k-2} y^4 - \frac{\alpha^4 (1-k)^4 x^{4k-3} y^5}{5} \Bigr).
\end{split}
\end{equation}
For any finite $M$, if $x, y \leq M$ the above convergence is uniform in $x$ and $y$, i.e.,
\begin{equation}
\label{eqn:Gbarx_CTCP3}
\lim_{p \rightarrow 0} \sup_{ p^{\frac{1}{2-k}} \leq x \leq M, y \leq M } \Bigl|\mathbb{P}\Bigl(p^{\frac{1-k}{2-k}}G_{\floor{\frac{x}{p^{\frac{1}{2-k}}}}}^p \geq y\Bigr) - \mathbb{P}(\overline{G}_x \geq y) \Bigr| = 0,
\end{equation}
Also, the moment generating functions of $p^{\frac{1-k}{2-k}}G_{\floor{\frac{x}{p^{\frac{1}{2-k}}}}}^p$  are uniformly bounded in some neighbourhood of $0$.
\end{proposition}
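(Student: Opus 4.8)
The plan is to exploit that, conditioned on no loss having yet occurred, the window follows the deterministic trajectory \eqref{eqn:CTCP_Wt}, so that $\{G^p_{x_0}\ge t\}$ is, up to rounding, exactly the event that none of the packets transmitted by time $t$ is dropped. Write $x_0\defeq\floor{x\,p^{-1/(2-k)}}$ (which is $\ge 1$ precisely when $x\ge p^{1/(2-k)}$), measure time $t$ in units of $R$ so that $W(t)^{1-k}=x_0^{1-k}+\alpha(1-k)t$ for the solution of \eqref{eqn:CTCP_ode} started at $W(0)=x_0$, and recall from the computation in \eqref{eqn:CTCP_eqn1} that the number of packets sent by time $t$ is $N_{x_0}(t)=\frac{W(t)^{2-k}-x_0^{2-k}}{\alpha(2-k)}$. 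Since each transmitted packet is lost independently with probability $p$,
\[
\mathbb{P}\bigl(G^p_{x_0}\ge t\bigr)=(1-p)^{N_{x_0}(t)},
\]
up to corrections of order one in the exponent coming from the integrality of window sizes and of the number of RTTs (absent in the purely continuous convention).

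First I would substitute $t=y\,p^{-(1-k)/(2-k)}$ and identify the limit of the exponent. Using $p=\bigl(p^{(1-k)/(2-k)}\bigr)^{(2-k)/(1-k)}$ one gets $p\,W(t)^{2-k}=\bigl(p^{(1-k)/(2-k)}x_0^{1-k}+\alpha(1-k)y\bigr)^{(2-k)/(1-k)}$, and since $p^{(1-k)/(2-k)}x_0^{1-k}=\bigl(p^{1/(2-k)}x_0\bigr)^{1-k}\to x^{1-k}$ and $p\,x_0^{2-k}\to x^{2-k}$, we obtain $p\,N_{x_0}(t)\to\frac{(x^{1-k}+\alpha(1-k)y)^{(2-k)/(1-k)}-x^{2-k}}{\alpha(2-k)}$. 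Next, since $(1-p)^{N}=\exp(N\ln(1-p))=\exp(-pN)\exp\bigl(-N(p^2/2+O(p^3))\bigr)$ and $p^2 N=p\cdot(pN)\to0$ (the exponent $pN$ being bounded in this regime), the factor $(1-p)^{N_{x_0}(t)}$ has the same limit $\exp\bigl(-\lim p\,N_{x_0}(t)\bigr)$. This is the pointwise limit; to recognise it as the right-hand side of \eqref{eqn:Gbarx_CTCP2}, I would expand $\bigl(x^{1-k}+\alpha(1-k)y\bigr)^{(2-k)/(1-k)}$ by the binomial theorem — for the TCP Compound exponent $(2-k)/(1-k)=5$ (i.e.\ $k=3/4$) it terminates at degree $5$ — and collect the powers of $y$. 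Since the limiting complementary c.d.f.\ is continuous and strictly decreasing in $y$, pointwise convergence of the complementary c.d.f.s at every $y$ is convergence in distribution.

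For the uniform statement \eqref{eqn:Gbarx_CTCP3} I would carry all the error terms through simultaneously on $\{\,p^{1/(2-k)}\le x\le M,\ y\le M\,\}$. The principal nuisance is the floor: with $\tilde x\defeq p^{1/(2-k)}x_0$ one has $|\tilde x-x|\le p^{1/(2-k)}\to0$ uniformly, and $z\mapsto z^{2-k}$ and $z\mapsto(z^{1-k}+\alpha(1-k)y)^{(2-k)/(1-k)}$ are uniformly continuous for $z,y$ in a bounded set, so replacing $\tilde x$ by $x$ changes the exponent by $o(1)$ uniformly; rounding $t$ to an integer number of RTTs and rounding each per-RTT window to an integer add $O\bigl(p^{(1-k)/(2-k)}\bigr)$ to the exponent; and the gap between $(1-p)^{N}$ and $e^{-pN}$ is $e^{O(p^2 N)}=e^{O(p)}=1+o(1)$ uniformly, since $p\,N_{x_0}(t)$ stays bounded on this set. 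Combining these with the explicit limit gives \eqref{eqn:Gbarx_CTCP3}. For the uniform bound on the moment generating functions I would use $W(s)\ge(\alpha(1-k)s)^{1/(1-k)}$ to get the $x_0$-free lower bound $N_{x_0}(t)\ge c_0\,t^{(2-k)/(1-k)}$ with $c_0\defeq\frac{(\alpha(1-k))^{(2-k)/(1-k)}}{\alpha(2-k)}$; hence $\mathbb{P}\bigl(p^{(1-k)/(2-k)}G^p_{x_0}\ge y\bigr)\le e^{-p\,N_{x_0}(y p^{-(1-k)/(2-k)})}\le e^{-c_0\,y^{(2-k)/(1-k)}}$ uniformly in $p$ and $x$, and since $(2-k)/(1-k)>1$ the integral $\int_0^\infty e^{sy}e^{-c_0 y^{(2-k)/(1-k)}}\,dy$ is finite for every $s$, which bounds the m.g.f.s uniformly on a neighbourhood of $0$.

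I expect the main obstacle to be the uniformity in \eqref{eqn:Gbarx_CTCP3}: one must verify that the floor truncation, the integer-RTT rounding, and the $\ln(1-p)\approx-p$ replacement all produce errors that vanish uniformly right down to the cutoff $x=p^{1/(2-k)}$, where $x_0$ may be as small as $1$. The step that makes everything collapse — and that also delivers the m.g.f.\ bound — is the clean, $x_0$-independent estimate $N_{x_0}(t)\ge c_0\,t^{(2-k)/(1-k)}$ obtained by discarding $x_0^{1-k}$ inside $W(s)$.
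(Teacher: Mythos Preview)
Your proposal is correct and follows the same overall strategy as the paper --- recognise that $\mathbb{P}(G^p_{x_0}\ge m)=(1-p)^{\text{packets sent}}$, feed in the ODE trajectory for the window, and show the exponent scales correctly --- but the execution is different in a useful way. The paper keeps the discrete per-RTT sum $\sum_{i=0}^{m} x_0(1+iz)^{1/(1-k)}$ with $z=\alpha(1-k)/x_0^{1-k}$, expands $(1+iz)^4$ binomially, substitutes the closed forms for $\sum_{i\le m} i^j$, and then sorts term-by-term which monomials $x_0 m^n z^j$ survive as $p\to0$; you instead pass directly to the fluid integral $N_{x_0}(t)=\frac{W(t)^{2-k}-x_0^{2-k}}{\alpha(2-k)}$, take the limit of $pN_{x_0}$ in closed form, and binomial-expand only at the very end to recognise \eqref{eqn:Gbarx_CTCP2}. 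Your route is shorter and makes the structure of the limiting exponent transparent; the paper's explicit term-by-term bookkeeping, on the other hand, renders the uniformity argument \eqref{eqn:Gbarx_CTCP3} completely mechanical (every non-surviving monomial is bounded by an explicit power of $p$ times a polynomial in $M$), whereas your version leans on uniform continuity of $z\mapsto z^{2-k}$ and $z\mapsto(z^{1-k}+\alpha(1-k)y)^{(2-k)/(1-k)}$ on compacta. One small imprecision to fix: the discrepancy between the actual packet count $\sum_{i=0}^{m}\lfloor W(i)\rfloor$ and your $N_{x_0}(t)$ is not ``of order one'' in the exponent but of order $m+W(m)$; it is only after multiplication by $\log(1-p)\sim -p$ that it becomes $O(p^{1/(2-k)})+O(p^{(1-k)/(2-k)})=o(1)$, which you do state correctly further down. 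For the m.g.f.\ bound the paper dominates $G^p_{x_0}$ by $G^p_{1}$ and redoes the discrete-sum computation; your lower bound $W(s)\ge(\alpha(1-k)s)^{1/(1-k)}$ is the continuous version of the same domination and gives the same $\exp(-c\,y^{(2-k)/(1-k)})$ tail, though in the discrete setting a lower-order additive correction (the paper's $+z_0^4 y/30$) appears from the floors and should be absorbed into the constants $c_1,c_2$ before invoking the m.g.f.\ criterion.
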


\begin{proof}
We have,
\begin{equation}
\label{eqn:Gbarx_CTCP1_start}
\mathbb{P}\bigl(G_{\floor{\frac{x}{p^{\frac{1}{2-k}}}}}^p > \floor{ \frac{y}{p^{\frac{1-k}{2-k}}}}\bigr) \leq \mathbb{P}\bigl(p^{\frac{1-k}{2-k}}G_{\floor{\frac{x}{p^{\frac{1}{2-k}}}}}^p \geq y\bigr) \leq \mathbb{P}\bigl(G_{\floor{\frac{x}{p^{\frac{1}{2-k}}}}}^p \geq \floor{ \frac{y}{p^{\frac{1-k}{2-k}}}}\bigr), \\
\end{equation}
with 
\begin{equation}
\mathbb{P}\bigl(G_{\floor{\frac{x}{p^{\frac{1}{2-k}}}}}^p > \floor{ \frac{y}{p^{\frac{1-k}{2-k}}}}\bigr) = (1-p)^{x_0 + x_1 + \cdots + x_{\floor{\frac{y}{p^{\frac{1-k}{2-k}}}}}},
\end{equation}
where $x_0 = \floor{\frac{x}{p^{\frac{1}{2-k}}}}$ is the initial window size and $x_i$ is the window size at the end of the $i^{th}$ RTT. Using \eqref{eqn:CTCP_Wt}, we get $x_i =  \floor{x_0\bigl( 1 + \frac{\alpha (1 - k) i}{x_0^{1 - k}}\bigr)^{\frac{1}{1-k}}}$. Now $x_0\bigl( 1 + \frac{\alpha (1 - k) i}{x_0^{1 - k}}\bigr)^{\frac{1}{1-k}} \in [x_i, x_i+1] $. Let $m = \floor{\frac{y}{p^{\frac{1-k}{2-k}}}}$ and $z = \frac{\alpha (1-k)}{x_0^{1-k}}$.  Therefore,
\begin{equation}
\label{eqn:Gbarx_CTCP4}
\begin{split}
\mathbb{P}\bigl( G_{\floor{\frac{x}{p^{\frac{1}{2-k}}}}}^p > \floor{ \frac{y}{p^{\frac{1-k}{2-k}}}}\bigr) 
\in \bigl[& (1-p)^{\bigl(x_0(1 + (1 + z)^{\frac{1}{1-k}} + \cdots + (1 + mz)^{\frac{1}{1-k}}) + (m+1) \bigr)}, \\
 & (1-p)^{\bigl(x_0(1 + (1 + z)^{\frac{1}{1-k}} + \cdots + (1 + mz)^{\frac{1}{1-k}}) - (m+1) \bigr)} \bigr],
\end{split}
\end{equation}
Since $k = \frac{3}{4}$ for TCP Compound, we get,
\begin{equation}
\label{eqn:Gbarx_CTCP5}
\begin{split}
(1 + z)^{\frac{1}{1-k}}  +  \cdots + (1 + mz)^{\frac{1}{1-k}} 
=&  (1 + z)^4 + \cdots + (1 + mz)^4, \\
=&  (1 + 4z + 6z^2 + 4z^3 +z^4) + \cdots + (1 \\ 
 &  + 4mz + 6(mz)^2 + 4(mz)^3 + (mz)^4).
\end{split}
\end{equation}
We can simplify terms in \eqref{eqn:Gbarx_CTCP4} using 
\begin{equation}
\label{eqn:Gbarx_CTCP6}
\begin{split}
(1 + & 4z +  6z^2  + 4z^3 +z^4) +  \cdots + (1 + 4mz + 6(mz)^2 + 4(mz)^3 + (mz)^4)) \\
& = m + 4z \sum_{i=1}^{m} i + 6z^2 \sum_{i=1}^{m} i^2 + 4z^3 \sum_{i=1}^{m} i^3 + z^4 \sum_{i=1}^{m} i^4. \\
\end{split}
\end{equation}
Therefore,
\begin{equation}
\label{eqn:Gbarx_CTCP7}
\begin{split}
x_0 & (1 + (1 + z)^{\frac{1}{1-k}} + \cdots + (1 + mz)^{\frac{1}{1-k}})  \\
=&  x_0 \bigl( (m +1)+ 4z \sum_{i=1}^{m} i + 6z^2 \sum_{i=1}^{m} i^2 + 4z^3 \sum_{i=1}^{m} i^3 + z^4 \sum_{i=1}^{m} i^4 \bigr). 
\end{split}
\end{equation}
After we expand the series, $\sum_{i = 1}^{m} i$, $\sum_{i = 1}^{m} i^2$,  $\sum_{i = 1}^{m} i^3$ and  $\sum_{i = 1}^{m} i^4$, we see that the terms in the RHS of equation \eqref{eqn:Gbarx_CTCP7} are of the form $ x_0 m^n z^j$ with $n - 1 \leq j \leq 4$, $1 \leq n \leq 5$. Also, we have an additional term $x_0$. Now,
\begin{equation}
\label{eqn:Gbarx_CTCP8}
x_0 m^n z^j = \alpha^j (1 - k)^j x^{1 - j + kj} y^n  p^{\frac{-1+(j-n)(1-k)}{2-k}}.
\end{equation}
If $j = (n-1)$, we have $\lim_{p \rightarrow 0} (1-p)^{p^{\frac{-1+(j-n)(1-k)}{2-k}}}$ $= \lim_{p \rightarrow 0} (1-p)^{\frac{1}{p}}$ $= e^{-1}$. If $j > n-1$, we get, $(1-p)^{p^{\frac{-1+(j-n)(1-k)}{2-k}}}$ $= (1-p)^{\frac{t-4}{5}}$, where $t = (j - n) \geq 0$. We have $\frac{t-4}{5} > -1$, for all $t \geq 0$ and $\lim_{p \rightarrow 0} (1-p)^{p^f} = 1$, if $f > -1$. Therefore, $\lim_{p \rightarrow 0} (1-p)^{p^{\frac{-1+(j-n)(1-k)}{2-k}}}$ $= 1$, for $j > n-1$ . Also, for the terms $p(m+1)$ and $x_0$, $\lim_{p \rightarrow 0} (1-p)^{(m+1)} =$ $\lim_{p \rightarrow 0} (1-p)^{-(m+1)} =  1$ and $\lim_{p \rightarrow 0} (1-p)^{x_0} = 1$. Thus, for the limit of equation \eqref{eqn:Gbarx_CTCP4} as $p \rightarrow 0$, we need to only consider terms of the form $x_0 m^n z^j$ with $j = n-1$. Therefore, we get
\begin{equation}
\label{eqn:Gbarx_CTCP9}
\begin{split}
\mathbb{P}\bigl( G_{\floor{\frac{x}{p^{\frac{1}{2-k}}}}}^p > \floor{ \frac{y}{p^{\frac{1-k}{2-k}}}}\bigr) \rightarrow \exp \bigl(-   & xy - 2\alpha (1 -k) x^k y^2 - 2 \alpha^2 (1-k)^2 x^{2k-1} y^3 \\
& - \alpha^3 (1 -k)^3 x^{3k-2} y^4 - \frac{\alpha^4 (1-k)^4 x^{4k-3} y^5}{5} \bigr),
\end{split}
\end{equation}
as $p \rightarrow 0$.
Using similar steps as before, we can show that
\begin{equation}
\label{eqn:Gbarx_CTCP9A}
\begin{split}
\mathbb{P}\bigl( G_{\floor{\frac{x}{p^{\frac{1}{2-k}}}}}^p \geq \floor{ \frac{y}{p^{\frac{1-k}{2-k}}}}\bigr) \rightarrow \exp \bigl(-   & xy - 2\alpha (1 -k) x^k y^2 - 2 \alpha^2 (1-k)^2 x^{2k-1} y^3 \\
& - \alpha^3 (1 -k)^3 x^{3k-2} y^4 - \frac{\alpha^4 (1-k)^4 x^{4k-3} y^5}{5} \bigr),
\end{split}
\end{equation}
as $p \rightarrow 0$.
This proves the convergence of $p^{\frac{1-k}{2-k}} G_{\floor{\frac{x}{p^{\frac{1}{2-k}}}}}^p$ to $\overline{G}_x$. 

We now show uniform convergence of $\mathbb{P}\Bigl(   p^{\frac{1-k}{2-k}}  G_{\floor{\frac{x}{p^{\frac{1}{2-k}}}}}^p \geq  y\Bigr)$ to $\mathbb{P}(\overline{G}_x \geq y)$. We assume that $x$, $y$ are bounded above by $M$. Taking logarithms on both sides of \eqref{eqn:Gbarx_CTCP4}, we get
\begin{equation}
\label{eqn:Gbarx_CTCP10}
\begin{split}
\log \mathbb{P}\Bigl( & G_{\floor{\frac{x}{p^{\frac{1}{2-k}}}}}^p  > \floor{ \frac{y}{p^{\frac{1-k}{2-k}}}}\Bigr) \\
\in \Bigl[& (x_0(1 + (1 + z)^{\frac{1}{1-k}} + \cdots + (1 + mz)^{\frac{1}{1-k}}) + (m+1)) \log(1-p), \\
 & (x_0(1 + (1 + z)^{\frac{1}{1-k}} + \cdots + (1 + mz)^{\frac{1}{1-k}}) - (m+1)) \log(1-p) \Bigr],
\end{split}
\end{equation}

The equation \eqref{eqn:Gbarx_CTCP10} has elements of the form $x_0 m^n z^j \log(1-p)$ with $n - 1 \leq j < 5$, $0 < n \leq 5$ and the terms $(m+1)\log(1-p)$ and $x_0 \log(1-p)$. The elements with $j = (n-1)$ are the only terms that contribute to the limit, i.e., equation \eqref{eqn:Gbarx_CTCP9}. From \eqref{eqn:Gbarx_CTCP8}, the remaining terms that do not contribute to the limit are of the form $c(n,j) x^{1 - j + kj} y^n p^{\frac{-1+(j-n)(1-k)}{2-k}} \log(1-p)$ with $j > (n-1)$ and $c(n,j)$ being some finite coefficient. Additionally, we have terms of the form $x_0 \log(1-p)$ and $(m+1)\log(1-p)$, which also do not contribute to \eqref{eqn:Gbarx_CTCP9}. For $n -1 < j$, $p^{\frac{-1+(j-n)(1-k)}{2-k}}$ is of the form $p^{\epsilon(n,j,k)}$ with $\epsilon(n,j,k) > -1$. The terms can be grouped together and written as $f(x,y, p)$, where $f$ has elements of the form $x^{1 - j + kj} y^n p^{\epsilon(n,j,k)} \log(1-p)$, with $p^{\epsilon(n,j,k)} > -1$.  Let $T = \max \{1, M\}$, hence $x^{1 - j + kj} y^n \leq T^{1 - j + kj + n}$, for $x,y \leq M$. Therefore, for $x,y \leq M$, we have,
\begin{equation}
\label{eqn:Gbarx_CTCP11}
\begin{split}
\Bigl| \log & \mathbb{P}\Bigl(  G_{\floor{\frac{x}{p^{\frac{1}{2-k}}}}}^p > \floor{ \frac{y}{p^{\frac{1-k}{2-k}}}} \Bigr)    - \log \mathbb{P}(\overline{G}_x \geq y) \Bigr|  \\
& \leq |f(x,y, p)| + x_0 \log(1-p) + (m+1) \log(1-p), \\ 
& \leq c_1 T^{5 - \frac{j}{4}} p^\epsilon \log(1-p) + T p^{-\frac{4}{5}} \log(1-p) + (T p^{-\frac{1}{5}} + 1) \log(1-p) , \\
& \leq c_1  T^5 p^\epsilon \log(1-p) + T p^{-\frac{4}{5}} \log(1-p) + (T p^{-\frac{1}{5}} + 1) \log(1-p), 
\end{split}
\end{equation}
where the term $T^{5 - \frac{j}{4}}$ in the inequality comes from the element with the largest power for $x$ and $y$ in $f(x,y, p)$, $\epsilon = \displaystyle{\min_{(n,j): j > (n -1), n \leq 4}} \epsilon(n,j,k) > 0$ and $c_1$ is a constant independent of $p, x, y$. Therefore, we have
\begin{equation}
\label{eqn:Gbarx_CTCP12}
\lim_{p \rightarrow 0} \sup_{p^{\frac{1}{2-k}} \leq x \leq M, y \leq M } \Bigl| \log  \mathbb{P}\Bigl(   G_{\floor{\frac{x}{p^{\frac{1}{2-k}}}}}^p > \floor{ \frac{y}{p^{\frac{1-k}{2-k}}}}  \Bigr)   - \log \mathbb{P}(\overline{G}_x \geq y) \Bigr|  = 0.
\end{equation}
We can similarly prove
\begin{equation}
\label{eqn:Gbarx_CTCP12A}
\lim_{p \rightarrow 0} \sup_{p^{\frac{1}{2-k}} \leq x \leq M, y \leq M } \Bigl| \log  \mathbb{P}\Bigl(   G_{\floor{\frac{x}{p^{\frac{1}{2-k}}}}}^p \geq \floor{ \frac{y}{p^{\frac{1-k}{2-k}}}}  \Bigr)   - \log \mathbb{P}(\overline{G}_x \geq y) \Bigr|  = 0.
\end{equation}
The result \eqref{eqn:Gbarx_CTCP3} in Proposition \ref{prop:Gbarx_CTCP1} then follows from the uniform continuity of the $\exp()$ function on $(-\infty, 0).$

To prove uniform boundedness of the moment generating functions of $p^{\frac{1-k}{2-k}}$  $G_{\floor{\frac{x}{p^{\frac{1}{2-k}}}}}^p$, we require a uniform bound on their tail distributions. For a fixed $p$, $ G_{\floor{\frac{x}{p^{\frac{1}{2-k}}}}}^p$ is stochastically dominated by $G_{1}^p$, for all $x > p^{\frac{1}{2-k}}$. Let $z_0 = \alpha(1-k)$ and $m = \floor{\frac{y}{p^{\frac{1-k}{2-k}}}}$. Therefore, for all $x > p^{\frac{1}{2-k}}$, we have
\begin{equation}
\label{eqn:Gbarx_CTCP_moment0}
\begin{split}
\mathbb{P}\Bigl( G_{\floor{\frac{x}{p^{\frac{1}{2-k}}}}}^p  \geq \floor{ \frac{y}{p^{\frac{1-k}{2-k}}}}\Bigr) 
\leq   (1-p)^{(1 + (1 + z_0)^{\frac{1}{1-k}} + \cdots + (1 + (m-1)z_0)^{\frac{1}{1-k}} - m)},
\end{split}
\end{equation}
Since, $(1-p) \leq \exp(-p)$, we have,
\begin{equation}
\label{eqn:Gbarx_CTCP_moment0A}
\begin{split}
\mathbb{P}\Bigl( G_{\floor{\frac{x}{p^{\frac{1}{2-k}}}}}^p  \geq \floor{ \frac{y}{p^{\frac{1-k}{2-k}}}}\Bigr) 
\leq  e^{(-p(1 + (1 + z_0)^{\frac{1}{1-k}} + \cdots + (1 + (m-1)z_0)^{\frac{1}{1-k}} - m)}.
\end{split}
\end{equation}
Taking $\log$ and using \eqref{eqn:Gbarx_CTCP1_start}, we get
\begin{equation}
\label{eqn:Gbarx_CTCP_moment}
\begin{split}
\log \mathbb{P}\Bigl(  & p^{\frac{1-k}{2-k}}  G_{\floor{\frac{x}{p^{\frac{1}{2-k}}}}}^p \geq  y\Bigr)    \\ 
& \leq  -p (1 + (1 + z_0)^{\frac{1}{1-k}} + \cdots + (1 + (m-1)z_0)^{\frac{1}{1-k}}) + pm , \\
& = -p (4z_0 \sum_{i=1}^{m-1} i + 6z_0^2 \sum_{i=1}^{m-1} i^2 + 4z_0^3 \sum_{i=1}^{m-1} i^3 + z_0^4 \sum_{i=1}^{m-1} i^4 ), \text{ (from \eqref{eqn:Gbarx_CTCP7})}\\
& \leq -p z_0^4 \sum_{i=1}^{m} i^4 , \\
& = -p z_0^4 \Bigl(\frac{(m-1)^5}{5} +  \frac{(m-1)^4}{2} + \frac{(m-1)^3}{3} - \frac{m-1}{30}\Bigr), \\
& \leq -p z_0^4 \Bigl( \frac{(m-1)^5}{5} - \frac{m-1}{30} \Bigr), \\
& \leq - \frac{pz_0^4 (m-1)^5}{5}   + p^{\frac{4}{5}} \frac{z_0^4y}{30}, \hspace{0.8cm}\text{(since $m = \floor{\frac{y}{p^{\frac{1-k}{2-k}}}} \leq  \frac{y}{p^{\frac{1-k}{2-k}}} + 1$)} \\
& \leq - \frac{p z_0^4 (\frac{m}{2})^5}{5}   + \frac{z_0^4y}{30}, \\
& \leq - \frac{z_0^4 y^5}{160}   + \frac{z_0^4y}{30}, 
\end{split}
\end{equation}
Therefore, for all $x > p^{\frac{1}{2-k}}$ and for all $p$ the tail distribution for $p^{\frac{1-k}{2-k}} G_{\floor{\frac{x}{p^{\frac{1}{2-k}}}}}^p $ is upper bounded by $\exp ( -  \alpha^4 (1 -k)^4 (\frac{ y^5}{160}   - \frac{y}{30}))$. We can find constants $c_1 > 0$, $c_2 > 0$, such that 
\begin{equation}
\mathbb{P}\Bigl(p^{\frac{1-k}{2-k}}  G_{\floor{\frac{x}{p^{\frac{1}{2-k}}}}}^p \geq  y\Bigr) \leq c_1 \exp(-c_2 y),
\end{equation}
for all $p \in (0,1)$, for all $x > p^{\frac{1}{2-k}}$. We get the desired result using Proposition \ref{prop:mgf_bound} in Appendix \ref{app:appendixA}. 
\end{proof}

We now derive a limiting result for the Markov chain, $\{V_k(p)\}$ embedded at the loss epochs of the TCP Compound window evolution process. Suppose $V_0(p) = x$, then  $V_1(p)$ is given by
\begin{equation}
\label{eqn:Vn_of_p}
V_1(p) = (1 - \beta) \Bigl(x^{1-k} + \alpha (1 - k) G_x^p\Bigr)^\frac{1}{1-k},
\end{equation}
where $G_x^p$ denotes time (in multiples of $R$) for first loss given that the initial window size is $x$.

We now define a Markov chain, $\{\overline{V}_n\}$, which serves as the limit for the process $\{V_n(p)\}$ with appropriate scaling. Let $\overline{V}_0$ be a random variable with an arbitrary initial distribution on $\mathbb{R}^+$. Define $\overline{V}_n$ for $n \geq 1$ as
\begin{equation}
\label{eqn:lim_Vn_def}
\overline{V}_n = (1 - \beta) \Bigl(\overline{V}_{n-1}^{1-k} + \alpha (1 - k) (\overline{G}_{\overline{V}_{n-1}}) \Bigr)^\frac{1}{1-k},
\end{equation}
where $\{\overline{G}_{\overline{V}_{n-1}}\}$ are random variables with distribution given by \eqref{eqn:Gbarx_CTCP2}. Also,  $\overline{G}_{\overline{V}_{n-1}}$ is chosen independently of $\{\overline{V}_k: k < n-1\}$. 
\begin{proposition}
\label{prop:Vbarx_CTCP}
Suppose $\overline{V}_0 = x$ and $V_0(p) = \floor{\frac{x}{p^{\frac{1}{2-k}}}}$ for some $x > 0$, for all $p>0$. Then we have 
\begin{equation}
\label{eqn:Vbarx_CTCP2}
\begin{split}
\lim_{p \rightarrow 0} \sup_{x \geq p^{\frac{1}{2-k}} }  \Bigl| & \mathbb{P}_x(p^{\frac{1}{2 -k}}V_1(p) \leq a_1, p^{\frac{1}{2 -k}}V_2(p) \leq a_2, \cdots, p^{\frac{1}{2 - k}}V_n(p) \leq a_n  ) \\
& - \mathbb{P}_x(\overline{V}_1 \leq a_1, \overline{V}_2 \leq a_2, \cdots, \overline{V}_n \leq a_n)  \Bigr| = 0,
\end{split}
\end{equation}
where $ a_i \in \mathbb{R}^+$, for $i = 1, 2, \cdots, n$ and $\mathbb{P}_x$ denotes the law of the processes when $\overline{V}_0 = x$ and $V_0(p) = \floor{\frac{x}{p^{\frac{1}{2-k}}}}$.
\end{proposition}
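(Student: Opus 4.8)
The plan is to induct on $n$, peeling off one loss epoch at a time with the Markov property and feeding in the uniform estimates of Proposition~\ref{prop:Gbarx_CTCP1}. Write $\tilde V_m(p)\defeq p^{\frac{1}{2-k}}V_m(p)$ and $\Phi(u,g)\defeq (1-\beta)\bigl(u^{1-k}+\alpha(1-k)g\bigr)^{\frac{1}{1-k}}$; multiplying \eqref{eqn:Vn_of_p} by $p^{\frac{1}{2-k}}$ and using $p^{\frac{1}{2-k}}\floor{x/p^{\frac{1}{2-k}}}\to x$ rewrites the recursion as $\tilde V_1(p)=\Phi\bigl(p^{\frac{1}{2-k}}V_0(p),\,p^{\frac{1-k}{2-k}}G^p_{V_0(p)}\bigr)$, while \eqref{eqn:lim_Vn_def} reads $\overline V_1=\Phi(\overline V_0,\overline G_{\overline V_0})$. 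For the base case $n=1$ I fix $a_1$ and use that $\Phi(u,\cdot)$ is continuous and strictly increasing, so $\{\Phi(u,g)\le a_1\}=\{g\le g^*(u)\}$ with $g^*(u)\defeq\bigl((a_1/(1-\beta))^{1-k}-u^{1-k}\bigr)/(\alpha(1-k))$ when this is nonnegative and the event empty otherwise; here $g^*$ is bounded above by $(a_1/(1-\beta))^{1-k}/(\alpha(1-k))$ and uniformly continuous on compacts. Hence $\mathbb P_x(\tilde V_1(p)\le a_1)=\mathbb P\bigl(p^{\frac{1-k}{2-k}}G^p_{V_0(p)}\le g^*(p^{\frac{1}{2-k}}V_0(p))\bigr)$ and $\mathbb P_x(\overline V_1\le a_1)=1-\mathbb P(\overline G_x\ge g^*(x))$, using that $\overline G_x$ has the continuous complementary cdf \eqref{eqn:Gbarx_CTCP2}. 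On $p^{\frac{1}{2-k}}\le x\le M$ the arguments $p^{\frac{1}{2-k}}V_0(p)$ and $g^*$ stay in a fixed compact set, so the uniform convergence \eqref{eqn:Gbarx_CTCP3}, the joint uniform continuity of $(x,y)\mapsto\mathbb P(\overline G_x\ge y)$ on compacts, the uniform continuity of $g^*$, and $|p^{\frac{1}{2-k}}V_0(p)-x|\le p^{\frac{1}{2-k}}$ together give $\sup_{p^{1/(2-k)}\le x\le M}\bigl|\mathbb P_x(\tilde V_1(p)\le a_1)-\mathbb P_x(\overline V_1\le a_1)\bigr|\to 0$; and the truncation is harmless because $\tilde V_1(p)\ge(1-\beta)(x-p^{\frac{1}{2-k}})$ and $\overline V_1\ge(1-\beta)x$, so both probabilities vanish once $x>a_1/(1-\beta)+1\defeq M_1$ and $p$ is small. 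This proves \eqref{eqn:Vbarx_CTCP2} for $n=1$; the uniform tail bound of Proposition~\ref{prop:Gbarx_CTCP1} also makes $\{\tilde V_1(p)\}_p$ tight uniformly in the starting point.

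For the inductive step I assume \eqref{eqn:Vbarx_CTCP2} for $n-1$ with error $\varepsilon_p\to0$ uniform in the initial point. By the Markov property, conditionally on $V_1(p)=v$ the block $(V_2(p),\dots,V_n(p))$ has the law of $(V_1(p),\dots,V_{n-1}(p))$ started from window $v$; since $V_1(p)\ge1$ we have $\tilde V_1(p)\ge p^{\frac{1}{2-k}}$, so the hypothesis applies with initial scaled state $\tilde V_1(p)$ and yields
\[
\mathbb P_x\bigl(\tilde V_1(p)\le a_1,\dots,\tilde V_n(p)\le a_n\bigr)=\mathbb E_x\bigl[H_{n-1}(\tilde V_1(p))\,\mathbf{1}\{\tilde V_1(p)\le a_1\}\bigr]+O(\varepsilon_p),
\]
where $H_{n-1}(u)\defeq\mathbb P_u(\overline V_1\le a_2,\dots,\overline V_{n-1}\le a_n)$, while by \eqref{eqn:lim_Vn_def}, $\mathbb P_x(\overline V_1\le a_1,\dots,\overline V_n\le a_n)=\mathbb E_x\bigl[H_{n-1}(\overline V_1)\,\mathbf{1}\{\overline V_1\le a_1\}\bigr]$. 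Now $H_{n-1}$ is $[0,1]$-valued and continuous: the limit kernel $u\mapsto\mathrm{Law}(\Phi(u,\overline G_u))$ is weakly continuous, because \eqref{eqn:Gbarx_CTCP2} is jointly continuous in $(x,y)$ and $\Phi$ is continuous, so $H_{n-1}$ is continuous by iteration. Also $\overline V_1$ has a continuous law, so $\mathbb P_x(\overline V_1=a_1)=0$ and $u\mapsto H_{n-1}(u)\mathbf{1}\{u\le a_1\}$ is bounded and $\mathrm{Law}_x(\overline V_1)$-a.s.\ continuous. The base case gives $\tilde V_1(p)\Rightarrow\overline V_1$ with the cdf convergence uniform in $x$, and together with the uniform tightness of $\{\tilde V_1(p)\}$ and the truncation to $x\le M_1$ (justified as before, since $\tilde V_1(p)\ge(1-\beta)(x-p^{\frac{1}{2-k}})$ and $\overline V_1\ge(1-\beta)x$ force both $n$-dimensional probabilities to $0$ for large $x$), this upgrades to $\mathbb E_x[H_{n-1}(\tilde V_1(p))\mathbf{1}\{\tilde V_1(p)\le a_1\}]\to\mathbb E_x[H_{n-1}(\overline V_1)\mathbf{1}\{\overline V_1\le a_1\}]$ uniformly in $x$, which closes the induction.

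The real difficulty is bookkeeping the \emph{uniformity} through this one-step composition: Proposition~\ref{prop:Gbarx_CTCP1} only delivers convergence uniform on bounded sets of $x$, whereas the supremum in \eqref{eqn:Vbarx_CTCP2} ranges over all $x\ge p^{\frac{1}{2-k}}$, and after one loss epoch the ``initial condition'' $\tilde V_1(p)$ for the remaining $n-1$ epochs is itself random and $p$-dependent. This is resolved by the two devices already used: (i) the elementary lower bounds $\tilde V_1(p)\ge(1-\beta)(x-p^{\frac{1}{2-k}})$ and $\overline V_1\ge(1-\beta)x$, which force both $n$-dimensional probabilities to vanish for $x$ past a threshold depending only on $a_1$, reducing everything to a compact set of starting points; and (ii) the uniform tail and moment generating function bound of Proposition~\ref{prop:Gbarx_CTCP1}, which gives tightness of $\{\tilde V_1(p)\}$ uniform in $x$, so that the expectations against the bounded continuous functional $H_{n-1}$ converge uniformly. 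A minor technicality is that $V_1(p)$ need not be an integer: one either keeps window sizes continuous throughout, in which case Proposition~\ref{prop:Gbarx_CTCP1} holds verbatim for continuous initial windows, or notes that the floor operations perturb the exponents of $(1-p)$ only by $O(1)$, which is negligible exactly as in the proof of Proposition~\ref{prop:Gbarx_CTCP1}.
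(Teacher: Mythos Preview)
Your proof is correct and follows essentially the same route as the paper: induct on $n$, handle $n=1$ by rewriting $\{\tilde V_1(p)\le a_1\}$ as a threshold event for the scaled $G$-variable and invoking the uniform estimate \eqref{eqn:Gbarx_CTCP3}, then use the Markov property at the first loss epoch to reduce the $n$-step statement to an expectation of a bounded functional of $\tilde V_1(p)$.

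A few differences are worth noting. The paper packages the passage from uniform cdf convergence to uniform convergence of expectations as a standalone lemma (Proposition~\ref{prop:unif_conv}) for continuous compactly supported test functions, and then applies it to $g(y)=\mathbb P\bigl((1-\beta)y(1+\alpha(1-k)\overline G_y/y^{1-k})^{1/(1-k)}\le a_2\bigr)\mathds 1_{\{y\le a_1\}}$. You instead argue directly with $H_{n-1}(u)\mathbf 1\{u\le a_1\}$, observing that it is bounded and $\mathrm{Law}_x(\overline V_1)$-a.s.\ continuous because $\overline V_1$ has a continuous distribution; this is in fact the more honest formulation, since the paper's claim that $g$ is ``continuous with compact support'' overlooks the jump of the indicator at $a_1$. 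Your explicit reduction of the supremum over $x\ge p^{1/(2-k)}$ to a compact range via the lower bound $\tilde V_1(p)\ge(1-\beta)(x-p^{1/(2-k)})$ is also something the paper leaves implicit. Finally, your remark about $V_1(p)$ not being an integer and the floor perturbing exponents by $O(1)$ is a genuine technical point the paper does not address.
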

\begin{proof}
We prove \eqref{eqn:Vbarx_CTCP2} for $n=1,2$, the proof for $n > 2$ follows by induction. The proof for $n=1$ is given below.
\begin{equation}
\label{eqn:Vbarx_CTCP3}
\begin{split}
\mathbb{P}_x( p^{\frac{1}{2 -k}}V_1(p) \leq a_1) &= \mathbb{P} \Biggl( (1 -\beta)x \Bigl(1 + \frac{\alpha(1-k)G^{p}_{\floor{\frac{x}{p^{\frac{1}{2-k}}}}} }{ \Bigl(\frac{x}{p^{\frac{1}{2-k}}}\Bigr)^{1-k} } \Bigr)^{\frac{1}{1-k}} \leq a_1   \Biggr), \\
&= \mathbb{P} \Biggl( p^{\frac{1-k}{2-k}} G^{p}_{\floor{\frac{x}{p^{\frac{1}{2-k}}}}}  \leq \frac{\Bigl(\frac{a_1}{1 -\beta}\Bigr)^{1-k} - x^{1-k}}{\alpha (1 -k)} \Biggr).
\end{split}
\end{equation}
Similarly, 
\begin{equation}
\label{eqn:Vbarx_CTCP3A}
\mathbb{P}_x( \overline{V}_1(p) \leq a_1) = \mathbb{P} \Biggl( \overline{G}_{x}  \leq \frac{\Bigl(\frac{a_1}{1 -\beta}\Bigr)^{1-k} - x^{1-k}}{\alpha (1 -k)} \Biggr).
\end{equation}
From equation \eqref{eqn:Gbarx_CTCP4} in Proposition \ref{prop:Gbarx_CTCP1},  $\mathbb{P} \Bigl( p^{\frac{1-k}{2-k}} G^{p}_{\floor{\frac{x}{p^{\frac{1}{2-k}}}}}$ $\leq y \Bigr)$ converges to $\mathbb{P} ( \overline{G}_x \leq y)$  uniformly in $x$ and $y$ over any bounded interval. Also, from \eqref{eqn:Vn_of_p} and \eqref{eqn:lim_Vn_def}, for $x > \frac{a_1}{1 - \beta}$, we have $\mathbb{P}_x( p^{\frac{1}{2 -k}}V_1(p) \leq a_1) $ $=$ $\mathbb{P}_x( \overline{V}_1 \leq a_1) $ $= 0$. Therefore, we have
\begin{equation}
\label{eqn:Vbarx_CTCP4}
\lim_{p \rightarrow 0} \sup_{x \geq p^{\frac{1}{2-k}} } \Bigl| \mathbb{P}_x \Bigl( p^{\frac{1}{2 -k}}V_1(p) \leq a_1   \Bigr) - \mathbb{P}_x \Bigl( \overline{V}_1 \leq a_1  \Bigr) \Bigr| = 0.
\end{equation}
This proves \eqref{eqn:Vbarx_CTCP2} for $n=1$. 
We use Proposition \ref{prop:Gbarx_CTCP1} for proving the result for $n=2$. Consider
\begin{equation}
\label{eqn:Vbarx_CTCP6}
\begin{split}
\mathbb{P}_x(& p^{\frac{1}{2 -k}}V_1(p) \leq a_1, p^{\frac{1}{2 -k}}V_2(p) \leq a_2) \\
= \displaystyle\int\limits_{0}^{a_1} & \mathbb{P} \Bigl( (1 -\beta)y \Biggl(1 + \frac{\alpha(1-k)G^{p}_{\floor{\frac{y}{p^{\frac{1}{2-k}}}}} }{ \Bigl(\frac{y}{p^{\frac{1}{2-k}}}\Bigr)^{1-k} } \Biggr)^{\frac{1}{1-k}} \leq a_2   \Bigr) \mathbb{P}_x(p^{\frac{1}{2 -k}}V_1(p)  \in dy) \\
= \displaystyle\int\limits_{0}^{a_1} & \mathbb{P} \Bigl( (1 -\beta)y \Biggl(1 + \frac{\alpha(1-k) p^{\frac{1-k}{2-k}} G^{p}_{\floor{\frac{y}{p^{\frac{1}{2-k}}}}} }{ y^{1-k} } \Biggr)^{\frac{1}{1-k}} \hspace*{-2mm} \leq a_2   \Bigr)  \mathbb{P}_x(p^{\frac{1}{2 -k}}V_1(p)  \in dy).
\end{split}
\end{equation}
From equation \eqref{eqn:Gbarx_CTCP4} in Proposition \ref{prop:Gbarx_CTCP1},  $\mathbb{P} \Bigl( p^{\frac{1-k}{2-k}} G^{p}_{\floor{\frac{x}{p^{\frac{1}{2-k}}}}}$ $\leq y \Bigr)$ converges to $\mathbb{P} ( \overline{G}_x \leq y)$  uniformly in $x$ and $y$ over any bounded interval. Therefore, for any given $\epsilon >0$ there exists a $p^*$ such that for  $p < p^*$, we have  
\begin{equation}
\label{eqn:Vbarx_CTCP7}
\begin{split}
\mathbb{P}_x(& p^{\frac{1}{2 -k}}V_1(p) \leq a_1, p^{\frac{1}{2 -k}}V_2(p) \leq a_2) \\
& \approx_\epsilon  \displaystyle\int\limits_{0}^{a_1}  \mathbb{P} \Bigl( (1 -\beta)y \Bigl(1 + \frac{\alpha(1-k) \overline{G}_{y}}{ y^{1-k} } \Bigr)^{\frac{1}{1-k}} \leq a_2  \Bigr) \mathbb{P}_x(p^{\frac{1}{2 -k}}V_1(p)  \in dy), \\
& =  \displaystyle\int\limits_{0}^{\infty}  \mathbb{P} \Bigl( (1 -\beta)y \Bigl(1 + \frac{\alpha(1-k) \overline{G}_{y}}{ y^{1-k} } \Bigr)^{\frac{1}{1-k}} \leq a_2  \Bigr) \mathds{1}_{\{y \leq a_1\}} \mathbb{P}_x(p^{\frac{1}{2 -k}}V_1(p)  \in dy), \\
& = \mathbb{E}_x [ g(p^{\frac{1}{2-k}} V_1(p))],
\end{split}
\end{equation}
where the symbol $\approx_\epsilon$ denotes that the RHS of the expression is $\epsilon$-close to the LHS and the function $g(y) = \mathbb{P} \Bigl( (1 -\beta)y \Bigl(1 + \frac{\alpha(1-k) \overline{G}_{y}}{ y^{1-k} } \Bigr)^{\frac{1}{1-k}} \leq a_2  \Bigr) \mathds{1}_{\{y \leq a_1\}}$. For any continuous functions $f$ on $\mathbb{R}^+$ with compact support, using Proposition \ref{prop:unif_conv} from appendix \ref{app:appendixA}, we have
\begin{equation}
\label{eqn:Vbarx_CTCP5}
\lim_{p \rightarrow 0} \sup_{x \geq p^{\frac{1}{2-k}} } \Bigl| E_x \Bigl[ f(p^{\frac{1}{2 -k}}V_1(p)) \Bigr] - E_x \Bigl[ f(\overline{V}_1) \Bigr] \Bigr| = 0,
\end{equation}
The function $g$ is continuous with a compact support, therefore using \eqref{eqn:Vbarx_CTCP5} we get,
\begin{equation}
\label{eqn:Vbarx_CTCP8}
\lim_{p \rightarrow 0} \sup_{x \geq p^{\frac{1}{2-k}} } \Bigl| \mathbb{P}_x \Bigl( p^{\frac{1}{2 -k}}V_1(p) \leq a_1, p^{\frac{1}{2 -k}}V_2(p) \leq a_2   \Bigr) - \mathbb{P}_x \Bigl( \overline{V}_1 \leq a_1, \overline{V}_2 \leq a_2   \Bigr) \Bigr| = 0.
\end{equation}
The proof of \eqref{eqn:Vbarx_CTCP2} for $n>2$ can be done using induction.
\end{proof}
Since the finite dimensional distributions of $ \{ p^{\frac{1}{2-k}}V_n(p) \} $ converge to $ \{ \overline{V}_n \}$, we have
\begin{corollary}
\label{coro:Vbarx_CTCP}
If $\lim_{p \rightarrow 0}  p^{\frac{1}{2-k}}V_0(p)$ converges in distribution to $\overline{V}_0$, then the Markov chain $ \{ p^{\frac{1}{2-k}}V_n(p) \} $ converges in distribution to the Markov chain $ \{ \overline{V}_n \}$.
\end{corollary}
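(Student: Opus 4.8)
The plan is to deduce Corollary~\ref{coro:Vbarx_CTCP} from Proposition~\ref{prop:Vbarx_CTCP} by averaging the conditional finite-dimensional limit over a converging family of initial laws. Since time is discrete and the state space is $\mathbb{R}^+$, convergence in distribution of the chain $\{p^{\frac{1}{2-k}}V_n(p)\}_{n\ge 0}$ to $\{\overline V_n\}_{n\ge 0}$ (as random elements of the sequence space with the product topology) amounts to convergence of all finite-dimensional distributions; so I would fix $n$ and show
\[
\bigl(p^{\frac{1}{2-k}}V_0(p),\dots,p^{\frac{1}{2-k}}V_n(p)\bigr)\ \Longrightarrow\ (\overline V_0,\dots,\overline V_n)\qquad(p\to 0),
\]
the $n=0$ marginal being exactly the hypothesis. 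The joint statement should follow by conditioning on the initial state.

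Let $\mu_p$ denote the law of $p^{\frac{1}{2-k}}V_0(p)$, so $\mu_p\Rightarrow\mu_0$, the law of $\overline V_0$. As $V_0(p)$ is integer valued with $V_0(p)\ge 1$, $\mu_p$ is carried by $\{p^{\frac{1}{2-k}}m:m\in\mathbb{Z},\ m\ge 1\}\subset[p^{\frac{1}{2-k}},\infty)$, and on this support $x/p^{\frac{1}{2-k}}=\floor{x/p^{\frac{1}{2-k}}}$, so conditioning on $p^{\frac{1}{2-k}}V_0(p)=x$ coincides with taking $V_0(p)=\floor{x/p^{\frac{1}{2-k}}}$. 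Hence, writing $H_p(x;a_1,\dots,a_n)\defeq\mathbb{P}_x(p^{\frac{1}{2-k}}V_1(p)\le a_1,\dots,p^{\frac{1}{2-k}}V_n(p)\le a_n)$ in the notation of Proposition~\ref{prop:Vbarx_CTCP}, that proposition gives $\sup_{x\ge p^{\frac{1}{2-k}}}\bigl|H_p(x;a_1,\dots,a_n)-H(x;a_1,\dots,a_n)\bigr|\to 0$, where $H(x;a_1,\dots,a_n)\defeq\mathbb{P}_x(\overline V_1\le a_1,\dots,\overline V_n\le a_n)$; moreover $H(\,\cdot\,;a_1,\dots,a_n)$ is bounded by $1$ and continuous in $x$ (for $n=1$ this is visible from \eqref{eqn:Vbarx_CTCP3A} together with the closed form \eqref{eqn:Gbarx_CTCP2}, and for general $n$ it follows by the same induction as in the proof of Proposition~\ref{prop:Vbarx_CTCP}).

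For any continuity point $a_0$ of $\mu_0$, the target joint distribution function equals $\int_{(0,a_0]}H_p(x;a_1,\dots,a_n)\,\mu_p(dx)$, while the limit it should approach is $\int_{(0,a_0]}H(x;a_1,\dots,a_n)\,\mu_0(dx)=\mathbb{P}(\overline V_0\le a_0,\overline V_1\le a_1,\dots,\overline V_n\le a_n)$. I would bound the difference by $\int\bigl|H_p-H\bigr|\,d\mu_p+\bigl|\int_{(0,a_0]}H\,d\mu_p-\int_{(0,a_0]}H\,d\mu_0\bigr|$: the first term is at most $\sup_{x\ge p^{\frac{1}{2-k}}}\bigl|H_p(x;\cdot)-H(x;\cdot)\bigr|\to 0$ because $\mu_p$ is carried by $[p^{\frac{1}{2-k}},\infty)$, exactly the region where the uniform estimate holds; the second tends to $0$ by the Portmanteau theorem, since $x\mapsto H(x;a_1,\dots,a_n)\mathds{1}_{\{x\le a_0\}}$ is bounded and, $a_0$ being a continuity point of $\mu_0$, continuous $\mu_0$-almost everywhere. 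Letting $a_0,a_1,\dots,a_n$ range over a dense set of continuity points of the $\overline V_i$ then gives the displayed finite-dimensional convergence, and hence the corollary.

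The main obstacle — essentially the only nonroutine step, since Proposition~\ref{prop:Vbarx_CTCP} already carries all the analysis — is the interchange of the limit $p\to 0$ with the integration against $\mu_p$: one must exploit that $\mu_p$ puts no mass below $p^{\frac{1}{2-k}}$, so that the \emph{uniform} (rather than merely locally uniform) convergence of Proposition~\ref{prop:Vbarx_CTCP} is available across the whole support of $\mu_p$, and one must confirm that the limiting kernel $x\mapsto H(x;a_1,\dots,a_n)$ is bounded and continuous so that the weak convergence $\mu_p\Rightarrow\mu_0$ can be applied to it.
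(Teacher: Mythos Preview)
Your proposal is correct and follows the same route the paper intends: the corollary is deduced from Proposition~\ref{prop:Vbarx_CTCP} by passing from deterministic to random initial conditions. The paper itself offers no proof beyond the one-line remark preceding the corollary, so your write-up in fact supplies the details the paper omits --- in particular the conditioning/Portmanteau argument and the observation that $\mu_p$ is supported on $[p^{\frac{1}{2-k}},\infty)$, which is precisely what makes the uniform estimate of Proposition~\ref{prop:Vbarx_CTCP} applicable across the whole integration.
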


The limiting distribution of the Markov chain $ \{ \overline{V}_n \}$ (if it exists) can be used to compute approximation to the time average throughput and average window size of TCP Compound for small enough $p$. The following proposition shows the existence of the limiting distribution for $\{ \overline{V}_n \} $ process.
\begin{proposition}
\label{prop:Vbarx_invariance}
The Markov chain $\{ \overline{V}_n \} $ is Harris recurrent and has a unique invariant distribution.
\end{proposition}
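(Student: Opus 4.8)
The plan is to establish that $\{\overline{V}_n\}$ is positive Harris recurrent, which in particular yields a unique invariant probability distribution. I would run the standard Foster--Lyapunov programme: (i) show the chain is $\varphi$-irreducible and aperiodic for a suitable $\varphi$; (ii) identify a petite set; (iii) verify a geometric drift inequality $PW \le \rho W + b$ with $\rho \in (0,1)$ for a norm-like function $W$. Steps (i)--(iii) give geometric ergodicity, hence positive Harris recurrence and a unique stationary law.

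\emph{The transition kernel.} Read as a formula in $x \ge 0$ (which is how \eqref{eqn:lim_Vn_def} uses it), the complementary cdf \eqref{eqn:Gbarx_CTCP2} is, for every $v \ge 0$, a smooth strictly decreasing bijection of $[0,\infty)$ onto $(0,1]$: since $k = \tfrac{3}{4}$ all coefficients of $y, \dots, y^5$ are nonnegative, and the coefficient of $y^5$ is the strictly positive constant $\tfrac{1}{5}\alpha^4(1-k)^4$ because $4k-3 = 0$. Hence $\overline{G}_v$ has a density strictly positive on $(0,\infty)$. As $g \mapsto (1-\beta)\bigl(v^{1-k} + \alpha(1-k)g\bigr)^{1/(1-k)}$ is a smooth increasing bijection of $(0,\infty)$ onto $\bigl((1-\beta)v, \infty\bigr)$, the one-step kernel of $\{\overline{V}_n\}$ has a density $q(v,w)$ that is jointly continuous and strictly positive exactly on $\{w > (1-\beta)v\}$. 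From this: (a) the $n$-step kernel has a density strictly positive on $\{w > (1-\beta)^n v\}$, and since $(1-\beta)^n v \downarrow 0$, for every $v$ and every Borel set $A$ with positive Lebesgue measure we get $P^n(v, A) > 0$ for all large $n$; thus $\{\overline{V}_n\}$ is $\varphi$-irreducible with $\varphi$ equal to Lebesgue measure on $(0,\infty)$. (b) Continuity and positivity of $q$ together with compactness imply that every set $[0,c]$ is small: for any $v \in [0,c]$, one step dominates a fixed positive multiple of Lebesgue measure on a compact subinterval of $\bigl((1-\beta)c, \infty\bigr)$, and choosing this subinterval inside $\bigl((1-\beta)c, c\bigr)$ (nonempty since $(1-\beta)c < c$) makes the dominating measure charge $[0,c]$, which also gives aperiodicity.

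\emph{The drift.} Take $W(v) = 1 + v^{1-k}$ and $\rho := (1-\beta)^{1-k} \in (0,1)$. Raising \eqref{eqn:lim_Vn_def} to the power $1-k$ gives $\overline{V}_n^{1-k} = (1-\beta)^{1-k}\bigl(\overline{V}_{n-1}^{1-k} + \alpha(1-k)\overline{G}_{\overline{V}_{n-1}}\bigr)$, so
\[
\mathbb{E}\bigl[W(\overline{V}_n) \mid \overline{V}_{n-1} = v\bigr] = 1 + \rho\,v^{1-k} + \rho\,\alpha(1-k)\,\mathbb{E}[\overline{G}_v] \le \rho\,W(v) + b,
\]
where $b := (1-\rho) + \rho\,\alpha(1-k)\,C_0$ and $C_0 := \sup_{v \ge 0}\mathbb{E}[\overline{G}_v]$. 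The key observation is that $C_0 < \infty$: since $4k - 3 = 0$ and every other exponent of $v$ in \eqref{eqn:Gbarx_CTCP2} is nonnegative, $\mathbb{P}(\overline{G}_v \ge y) \le \exp\!\bigl(-\tfrac{1}{5}\alpha^4(1-k)^4 y^5\bigr)$ for all $v \ge 0$, and integrating this uniform tail bound in $y$ gives a finite $C_0$. Hence the geometric drift $PW \le \rho W + b$ holds on the whole state space; equivalently $PW \le \rho' W + b\,\mathds{1}_C$ for any fixed $\rho' \in (\rho,1)$ with the petite set $C = \{W \le b/(\rho'-\rho)\}$, which is of the form $[0,c]$. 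Invoking the standard drift criterion for geometrically ergodic chains then completes the proof.

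\emph{Main obstacle.} The one substantive point is the uniform tail bound $\sup_{v \ge 0}\mathbb{E}[\overline{G}_v] < \infty$, and this is precisely where the TCP Compound value $k = \tfrac{3}{4}$ is used: it forces $4k-3 = 0$, so the quintic term in the exponent of \eqref{eqn:Gbarx_CTCP2} is a strictly positive $v$-independent constant that controls the tail of $\overline{G}_v$ uniformly, including as $v \to 0$; for $k > \tfrac{3}{4}$ this term would instead carry $v^{4k-3} \to 0$ and the drift constant would diverge near $v = 0$. Everything else---$\varphi$-irreducibility, the small-set property, aperiodicity, and the passage from geometric drift to positive Harris recurrence---is routine once the kernel is known to have a continuous, strictly positive density on $\{w > (1-\beta)v\}$.
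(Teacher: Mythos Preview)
Your argument is correct and in fact more complete than the paper's. The paper proceeds differently: it raises \eqref{eqn:lim_Vn_def} to the power $1-k$, notes that $\overline{G}_x$ is stochastically dominated by $\overline{G}_0$ for every $x>0$, and therefore that $\overline{V}_n^{1-k}$ is stochastically dominated by the autoregressive chain $X_n=(1-\beta)^{1-k}\bigl(X_{n-1}+\alpha(1-k)\eta_{n-1}\bigr)$ with $\eta_n$ i.i.d.\ copies of $\overline{G}_0$. Since $\{X_n\}$ is a contracting affine recursion with absolutely continuous noise, it is Harris recurrent with a unique stationary law, hence tight; the domination then forces tightness of the Ces\`aro averages $\frac{1}{n}\sum_{i=1}^n\mathbb{P}_x(\overline{V}_i\in\cdot)$, and any subsequential limit is invariant. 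Both routes ultimately rest on the same observation you isolate as the ``main obstacle'': with $k=\tfrac34$ the quintic coefficient $x^{4k-3}$ in \eqref{eqn:Gbarx_CTCP2} equals $1$, so $\overline{G}_x$ has a tail bound uniform in $x$ (the paper packages this as $\overline{G}_x\le_{\mathrm{st}}\overline{G}_0$, you as $\sup_{v\ge 0}\mathbb{E}[\overline{G}_v]<\infty$). What your Foster--Lyapunov argument buys is a cleaner proof of Harris recurrence and uniqueness for $\{\overline{V}_n\}$ itself---the paper's tightness step yields existence of an invariant measure but is terse on Harris recurrence and uniqueness for the original chain---and the bonus conclusion of geometric ergodicity. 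The paper's domination argument is shorter and avoids checking irreducibility and small-set conditions explicitly, at the cost of leaning on a reference for the AR(1) chain and leaving the uniqueness claim somewhat implicit.
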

\begin{proof}
For the Markov chain $\{ \overline{V}_n \}$, from \eqref{eqn:lim_Vn_def}, we have
\begin{equation}
\overline{V}_n^{1-k} = (1 - \beta)^{1-k} \Bigl(\overline{V}_{n-1}^{1-k} + \alpha (1 - k) (\overline{G}_{\overline{V}_{n-1}})_{n-1} \Bigr).
\end{equation}
Consider the sequence $\{X_n\}$ defined as follows. Let $X_0$ have the same distribution as $\overline{V}_0$. Choose a sequence of i.i.d random variables $\{\eta_{n}\}$ with same distribution as of $\overline{G}_0$ and set $X_n$ for $n \geq 1$ recursively as follows
\begin{equation}
X_n = (1 - \beta)^{1-k} \Bigl(X_{n-1} + \alpha (1 - k) \eta_{n-1} \Bigr).
\end{equation}
The term $(1 - \beta)^{1-k}$ is strictly less than $1$ and $\eta_n$ is absolutely continuous with respect to the Lebesgue measure. Therefore (see example $14.2.15$ in \cite{Athreya2006}) the Markov chain $\{X_n\}$ is Harris recurrent and admits a unique stationary distribution, $\pi(\cdot)$, and $\mathbb{P}_x(X_n \in \cdot)$ converges to $\pi(\cdot)$ for all $x$ in total variation.  Since $X_n$ converges to a stationary distribution, it is tight. From \eqref{eqn:Gbarx_CTCP2}, for any $x > 0$, $\overline{G}_x$  is stochastically smaller than $\overline{G}_0$. Therefore for each $n$, $\overline{V}_n^{1-k}$ is stochastically smaller than $X_n$ and consequently the sequence of distributions of $\sum_{i=1}^{n} \frac{1}{n} \mathbb{P}_x(\overline{V}_i^{1-k} \in \cdot)$ is tight. Therefore, the sequence of distributions of $\sum_{i=1}^{n} \frac{1}{n} \mathbb{P}_x(\overline{V}_i \in \cdot)$ is also tight. The limit of $ \sum_{i=1}^{n} \frac{1}{n}\mathbb{P}_x(\overline{V}_i \in \cdot)$ is an invariant probability distribution for the Markov chain, $\{\overline{V}_n\}$.
\end{proof}

In Proposition \ref{prop:Vbarx_invariance}, we have shown that the Markov chain $\{\overline{V}_n\}$ has a unique invariant distribution. Also for $p > 0$,  $p^{\frac{1}{2-k}}V_n(p)$  has a unique invariant distribution (see Proposition \ref{prop:ergodicity} for existence of unique invariant distribution for $W_n(p)$, existence of unique invariant distribution for $p^{\frac{1}{2-k}}V_n(p)$ follows from Corollary \ref{coro:ergodicity}). In Proposition \ref{prop:stat_dist_conv}, we see the relation between these invariant distributions. The proof for Proposition \ref{prop:stat_dist_conv} requires Proposition \ref{prop:convergence_hitting_times} and Lemma \ref{lemma:hitting_time}.

Let $\pi_p$ denote the invariant distribution of $p^{\frac{1-k}{2-k}} V_n(p)^{1-k}$ and let $\overline{\pi}$ denote the invariant distribution of $\overline{V}_n^{1-k}$. For $K > 0$, define $T_K(p)$ to be the hitting time to set $[0,K]$ by the process $p^{\frac{1-k}{2-k}}V_n(p)^{1-k}$ to hit $[0,K]$, similarly $\overline{T}_K$ is the corresponding hitting time for the $\overline{V}_n^{1-k}$ process. Let $\pi_p^K(\cdot) = \frac{\pi_p \mathds{1}_{K}(\cdot) }{\int_K \pi_p(x) dx}$, similarly let $\overline{\pi}^K(\cdot) = \frac{\overline{\pi}\mathds{1}_{K}(\cdot)}{ \int_K \overline{\pi}(x) dx }$. 
\begin{proposition}
\label{prop:convergence_hitting_times}
For any $n$ $\in \{1,2,3, \cdots \}$,
\begin{equation}
\lim_{p \rightarrow 0} \mathbb{P}_{\pi_p^K}(T_K(p) = n) = \mathbb{P}_{\overline{\pi}^K}(\overline{T}_K = n)
\end{equation}
\end{proposition}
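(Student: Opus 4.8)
The plan is to recognise $\pi_p^K$ (the normalised restriction of $\pi_p$ to $[0,K]$) and $\overline{\pi}^K$ as the stationary laws of the Markov chains obtained by observing $X_n^p \defeq p^{\frac{1-k}{2-k}}V_n(p)^{1-k} = \bigl(p^{\frac{1}{2-k}}V_n(p)\bigr)^{1-k}$ and $\overline{V}_n^{1-k}$ respectively at their successive returns to the set $[0,K]$. This uses the standard fact that the normalised restriction of a stationary probability measure to a set of positive measure is stationary for the induced (``watched'') chain; that induced chain is positive Harris recurrent here --- for $X^p$ by Corollary \ref{coro:ergodicity}, for $\overline{V}^{1-k}$ by Proposition \ref{prop:Vbarx_invariance} --- so its stationary law is unique and hence equals $\pi_p^K$, resp.\ $\overline{\pi}^K$. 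Under this identification $\mathbb{P}_{\pi_p^K}(T_K(p)=n)$ is precisely the probability, in the stationary regime of the watched chain, that the excursion of $X^p$ away from and back to $[0,K]$ lasts exactly $n$ RTTs, and likewise for $\overline{T}_K$.

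I would then prove two statements and combine them. First, $\mathbb{P}_x(T_K(p)=n)\to\mathbb{P}_x(\overline{T}_K=n)$ uniformly for $x$ in compact subsets of $(0,K]$: for fixed $n$ the event $\{T_K(p)=n\}$ depends only on $(X_1^p,\dots,X_n^p)$, so this follows from the uniform finite-dimensional convergence of Proposition \ref{prop:Vbarx_CTCP} --- applied after translating the initial state $x=\overline{V}_0^{1-k}$ into $V_0(p)=\floor{x^{1/(1-k)}p^{-1/(2-k)}}$, admissible once $x\ge p^{(1-k)/(2-k)}$ --- together with the Portmanteau theorem, using that the laws of $\overline{V}_j^{1-k}$, $j\ge1$, are absolutely continuous (the increments $\overline{G}$ are, as observed in the proof of Proposition \ref{prop:Vbarx_invariance}) and hence charge neither $\{K\}$ nor $\{0\}$. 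Second, $\pi_p^K$ converges weakly to $\overline{\pi}^K$: the family $\{\pi_p^K\}$ is tight (it is supported on $[0,K]$), and any subsequential weak limit $\nu$ satisfies $\nu=\nu\widetilde{\overline{P}}$, where $\widetilde{\overline{P}}$ is the kernel of the $\overline{V}^{1-k}$-induced chain, because $\pi_p^K\widetilde{P}_p=\pi_p^K$ and $\widetilde{P}_pf\to\widetilde{\overline{P}}f$ uniformly on $[0,K]$ for every bounded continuous $f$; by uniqueness of the stationary law of $\widetilde{\overline{P}}$ this forces $\nu=\overline{\pi}^K$. The uniform convergence $\widetilde{P}_pf\to\widetilde{\overline{P}}f$ is obtained by splitting $\widetilde{P}_pf(x)=\sum_{m\le N}\mathbb{E}_x[f(X_m^p)\mathds{1}_{\{T_K(p)=m\}}]+\sum_{m>N}(\cdots)$, bounding the tail by $\|f\|_\infty\,\mathbb{P}_x(T_K(p)>N)$ uniformly in $p$ and in $x\in[0,K]$ via the hitting-time estimate of Lemma \ref{lemma:hitting_time}, and passing to the limit in the finite sum using the first statement. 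Given these, $\mathbb{P}_{\pi_p^K}(T_K(p)=n)-\mathbb{P}_{\overline{\pi}^K}(\overline{T}_K=n)$ equals $\int_{[0,K]}\bigl(\mathbb{P}_x(T_K(p)=n)-\mathbb{P}_x(\overline{T}_K=n)\bigr)\pi_p^K(dx)$ plus $\int_{[0,K]}\mathbb{P}_x(\overline{T}_K=n)\bigl(\pi_p^K(dx)-\overline{\pi}^K(dx)\bigr)$; the first term vanishes by the uniform convergence of the integrand (plus the boundary estimate discussed next), the second by the weak convergence of $\pi_p^K$ since $x\mapsto\mathbb{P}_x(\overline{T}_K=n)$ is bounded and continuous.

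The main obstacle is the left edge $x\downarrow0$, where the uniform finite-dimensional convergence of Proposition \ref{prop:Vbarx_CTCP} is unavailable (its admissible range reaches down only to $x\asymp p^{(1-k)/(2-k)}$, while the state space of $X^p$ contains points tending to $0$). One has to show $\sup_p\pi_p^K([0,\delta))\to0$ as $\delta\to0$, i.e.\ that no stationary mass escapes to $0$; I would get this from the drift/Lyapunov estimates already used in the proof of Proposition \ref{prop:ergodicity} (the chain is pushed strongly away from small windows), or a posteriori from the weak convergence above via $\limsup_p\pi_p^K([0,\delta])\le\overline{\pi}^K([0,\delta])\to\overline{\pi}^K(\{0\})=0$. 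The second technical ingredient, needed to justify the tail truncation in the weak-convergence step, is exactly a uniform-in-$p$ bound on the tails of $T_K(p)$ started in $[0,K]$ --- the kind of estimate I would read off Lemma \ref{lemma:hitting_time} --- and one must also check, routinely from Proposition \ref{prop:Vbarx_invariance}, that the limiting induced chain $\widetilde{\overline{P}}$ on $[0,K]$ is Harris recurrent with a unique stationary law, so that the subsequential-limit argument pins $\nu$ down.
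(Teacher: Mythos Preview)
Your approach and the paper's share the same skeleton: identify $\pi_p^K$ and $\overline{\pi}^K$ as the stationary laws of the induced (watched) chains on $[0,K]$, use tightness of $\{\pi_p^K\}$ on $[0,K]$ to extract subsequential weak limits, show any such limit is invariant for the limiting induced chain, and conclude by uniqueness of that invariant law (Proposition~\ref{prop:Vbarx_invariance}). The paper's execution is more terse: rather than proving the kernel convergence $\widetilde{P}_p f \to \widetilde{\overline{P}} f$ via a truncation argument, it passes to the limit directly in the stationarity identity $\mathbb{P}_{\pi_{p_n}^K}(Y_0(p_n) \le a) = \mathbb{P}_{\pi_{p_n}^K}(Y_1(p_n) \le a)$, citing only Corollary~\ref{coro:Vbarx_CTCP} for the convergence of the right-hand side, and it does not split the final difference into the two integrals you write down. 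Your proof is more explicit about two points the paper leaves implicit: the need for a uniform-in-$p$ tail bound on $T_K(p)$ (you invoke Lemma~\ref{lemma:hitting_time}, which the paper reserves for the proof of Proposition~\ref{prop:stat_dist_conv}) and the loss of uniform control from Proposition~\ref{prop:Vbarx_CTCP} near $x=0$. Both concerns are legitimate --- the paper's appeal to Corollary~\ref{coro:Vbarx_CTCP} for the induced-chain step is not self-evidently sufficient, since $Y_1$ sits at a random return time and some tail cut-off is implicitly required to interchange limit and the countable union over return times. So your route is the same in spirit but more careful in execution; bringing Lemma~\ref{lemma:hitting_time} forward, as you do, is a reasonable way to close the gap.
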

\begin{proof}
Let $\mathbb{P}_x$ denote the law of the processes, $\{p^{\frac{1-k}{2-k}} V_n(p)^{1-k}\}$ and $\{\overline{V}_n^{1-k}\}$ when $p^{\frac{1-k}{2-k}} V_0(p)^{1-k} = x$ and $\overline{V}_0^{1-k} = x$. Suppose that $p^{\frac{1-k}{2-k}} V_0(p)^{1-k} = x \leq K$, then for the hitting time $T_K(p)$, we have
\begin{equation}
\label{eqn:stat_dist_conv_Tkp}
\begin{split}
\mathbb{P}_x(T_K(p) = n) = \mathbb{P}_x(& p^{\frac{1-k}{2-k}} V_1(p)^{1-k}  > K,  p^{\frac{1-k}{2-k}} V_2(p)^{1-k} > K,  \cdots, \\ 
& p^{\frac{1-k}{2-k}} V_{n-1}(p)^{1-k} > K, p^{\frac{1-k}{2-k}} V_n(p)^{1-k} \leq K).
\end{split}
\end{equation}
Similarly, assuming $\overline{V}_0^{1-k} = x \leq K$, we have
\begin{equation}
\label{eqn:stat_dist_conv_Tkbar}
\mathbb{P}_x(\overline{T}_K = n) = \mathbb{P}_x(\overline{V}_1^{1-k} > K, \overline{V}_2^{1-k} > K, \cdots, \overline{V}_{n-1}^{1-k} > K,  \overline{V}_n^{1-k} \leq K).
\end{equation}
From Proposition \ref{prop:Vbarx_invariance}, for any $x$,
\begin{equation}
\lim_{p \rightarrow 0} \mathbb{P}_x(T_K(p) = n) = \mathbb{P}_x(\overline{T}_K = n).
\end{equation}
The sequence $\pi_p^K$ is tight and hence there exists a subsequence $\{p_n\}$ with $p_n \rightarrow 0$ as $n \rightarrow \infty$ and distribution $\pi^K$ such that $\pi_{p_n}^K$ converges to $\pi^K$ as $n \rightarrow \infty$. From Corollary \ref{coro:Vbarx_CTCP}, along this sequence, 
\begin{equation}
\lim_{m \rightarrow \infty} \mathbb{P}_x(T_K(p_m) = n) = \mathbb{P}_{\pi^K}(\overline{T}_K = n).
\end{equation}
Let us define sequences, $\{Y_n(p)\}$ , $\{\overline{Y}_n\}$ which denote the sequence of states visited in successive visits to $[0,K]$ by the Markov chains, $\{p^{\frac{1-k}{2-k}} V_n(p)^{1-k} \}$ and $\{\overline{V}_n^{1-k} \}$ respectively. Since, $\pi_p$ is the stationary distribution for $\{p^{\frac{1-k}{2-k}} V_n(p)^{1-k}\}$, we have
\begin{equation}
\mathbb{P}_{\pi_{p_n}^K}(Y_0(p_n) \leq a) = \mathbb{P}_{\pi_{p_n}^K}(Y_1(p_n) \leq a).
\end{equation}
As $n \rightarrow \infty$, $\pi_{p_n}^K$ converges to $\pi^K$. From Corollary \ref{coro:Vbarx_CTCP}, along this sequence, we have
\begin{equation}
\pi^K([0,a])  \doteqdot \mathbb{P}_{\pi^K}(\overline{Y}_0 \leq a) = \lim_{n \rightarrow \infty} \mathbb{P}_{\pi_{p_n}^K}(Y_1(p_n) \leq a)  = \mathbb{P}_{\pi^K}(\overline{Y}_1 \leq a).
\end{equation}
Thus, $\pi^K$ is invariant distribution for the embedded Markov chain, $\{\overline{Y}_n\}$. However, since $\overline{\pi}$ is the invariant distribution of  $\overline{V}_n$, $\overline{\pi}_K$ is the invariant distribution of the embedded Markov chain, $\{\overline{Y}_n\}$. Therefore, by uniqueness of invariant distribution, we get $\pi^K = \overline{\pi}^K$. The above argument holds for any convergent subsequence of any arbitrarily chosen subsequence $\{p_n\}$ going to $0$ as $n \rightarrow \infty$. Therefore, we have
\begin{equation}
\mathbb{P}_{\pi_p^K}(T_K(p) = n) \rightarrow \mathbb{P}_{\overline{\pi}^K}(\overline{T}_K = n),
\end{equation}
as $p \rightarrow 0$.
\end{proof}

\begin{lemma}
\label{lemma:hitting_time}
There exist constants $K$, $\zeta$, and $\lambda > 0$ such that for all $0 < p < 1$, we have
\begin{equation}
\label{eqn:hitting_time1}
\mathbb{E}\Bigl[ e^{\zeta T_K(p) - \lambda p^{\frac{1-k}{2-k}}V_0(p)^{1-k}} \Big| p^{\frac{1-k}{2-k}}V_0(p)^{1-k} > K \Bigr] \leq 1.
\end{equation}
\end{lemma}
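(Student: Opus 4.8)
The plan is to establish a geometric Foster--Lyapunov drift for the scaled chain $U_n \defeq p^{\frac{1-k}{2-k}}V_n(p)^{1-k}$ that is uniform in $p\in(0,1)$, and then to convert it into the claimed exponential moment bound on $T_K(p)$ by the standard supermartingale argument. Raising \eqref{eqn:Vn_of_p} to the power $1-k$ and multiplying by $p^{\frac{1-k}{2-k}}$ gives the recursion
\begin{equation*}
U_{n+1} = (1-\beta)^{1-k}\bigl(U_n + \alpha(1-k)\,\widetilde G_n\bigr),\qquad \widetilde G_n \defeq p^{\frac{1-k}{2-k}}G^p_{V_n(p)} ,
\end{equation*}
so $\{U_n\}$ is a time-homogeneous Markov chain (a bijective relabelling of $\{V_n(p)\}$, which is Markov by Corollary \ref{coro:ergodicity}) and, when $U_0>K$, $T_K(p) = \inf\{n\ge 1: U_n\le K\}$.

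The core step is a one-step drift estimate uniform in $p$. For $u > K$ with $K\ge 1$ we have $V_n(p)\ge 1$, hence $\widetilde G_n$ is stochastically dominated by $p^{\frac{1-k}{2-k}}G^p_1$, and the tail bound derived in the proof of Proposition \ref{prop:Gbarx_CTCP1} --- valid for all $p\in(0,1)$ --- gives $\mathbb{P}(\widetilde G_n \ge y \mid U_n=u) \le \exp\!\bigl(-\alpha^4(1-k)^4\bigl(\tfrac{y^5}{160}-\tfrac{y}{30}\bigr)\bigr)$, whence $\mathbb{E}[\widetilde G_n\mid U_n=u]\le C$ for a constant $C<\infty$ independent of $p$ and of $u$. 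Therefore
\begin{equation*}
\mathbb{E}[U_{n+1}\mid U_n=u] \le (1-\beta)^{1-k}u + (1-\beta)^{1-k}\alpha(1-k)C .
\end{equation*}
Choosing any $\rho$ with $(1-\beta)^{1-k}<\rho<1$ and then $K\ge 1$ large enough that $(1-\beta)^{1-k}\alpha(1-k)C + 1 \le (\rho-(1-\beta)^{1-k})u$ for all $u>K$, we obtain, with $h(u)\defeq 1+u$,
\begin{equation*}
\mathbb{E}[h(U_{n+1})\mid U_n=u] \le \rho\,h(u)\qquad\text{for all } u>K \text{ and all } p\in(0,1).
\end{equation*}
I expect the only genuinely delicate point to be exactly this uniformity --- checking that $C$ and $K$ can be fixed once and for all over the whole range $p\in(0,1)$, including $p$ close to $1$, and that the stochastic-domination and flooring bookkeeping really is $p$-free --- but this is precisely what the uniform tail estimate in Proposition \ref{prop:Gbarx_CTCP1} supplies.

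Finally I would run the supermartingale argument. Put $\zeta\defeq -\log\rho>0$ and $M_n\defeq \rho^{-(n\wedge T_K(p))}\,h\bigl(U_{n\wedge T_K(p)}\bigr)$. On $\{n<T_K(p)\}$ we have $U_n>K$, so the drift bound yields $\mathbb{E}[M_{n+1}\mid\mathcal F_n]\le M_n$, while on $\{n\ge T_K(p)\}$ the process is frozen; hence $\{M_n\}$ is a nonnegative supermartingale and $\mathbb{E}_u[M_n]\le h(u)=1+u$ whenever $U_0=u>K$. Since $h\ge 1$, $M_n\ge \rho^{-(n\wedge T_K(p))}\uparrow \rho^{-T_K(p)}$, so by monotone convergence $\mathbb{E}_u[e^{\zeta T_K(p)}]=\mathbb{E}_u[\rho^{-T_K(p)}]\le 1+u$ (in particular $T_K(p)<\infty$ a.s.), uniformly in $p\in(0,1)$. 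Taking $\lambda\defeq 1$ and integrating over the conditional law of $U_0$ on $\{U_0>K\}$,
\begin{equation*}
\mathbb{E}\bigl[e^{\zeta T_K(p)-\lambda U_0}\bigm| U_0>K\bigr] = \mathbb{E}\bigl[e^{-U_0}\,\mathbb{E}_{U_0}\!\bigl[e^{\zeta T_K(p)}\bigr]\bigm| U_0>K\bigr] \le \mathbb{E}\bigl[(1+U_0)e^{-U_0}\bigm| U_0>K\bigr] \le 1,
\end{equation*}
where the last inequality uses $(1+u)e^{-u}\le 1$ for $u\ge 0$; recalling $U_0=p^{\frac{1-k}{2-k}}V_0(p)^{1-k}$ this is the assertion.
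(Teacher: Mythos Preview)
Your argument is correct. Both your proof and the paper's follow the same template --- a geometric Foster--Lyapunov drift for the scaled chain $U_n=p^{\frac{1-k}{2-k}}V_n(p)^{1-k}$, converted to an exponential moment on $T_K(p)$ via a stopped supermartingale --- but they differ in the choice of Lyapunov function. The paper takes the exponential $Z_n=e^{\lambda U_n}$ and appeals to the uniform \emph{moment-generating-function} bound on $p^{\frac{1-k}{2-k}}G^p$ from Proposition~\ref{prop:Gbarx_CTCP1} to get $\mathbb{E}[Z_{n+1}\mid\mathcal F_n]\le\eta Z_n$ on $\{U_n>K\}$; this delivers $\mathbb{E}[e^{\zeta T_K(p)-\lambda U_0}]\le 1$ directly, without an additional algebraic trick, at the cost of needing the full MGF bound and leaving $\lambda$ implicit. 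You instead take the linear function $h(u)=1+u$ and need only the uniform \emph{first-moment} bound $\mathbb{E}[\widetilde G_n\mid U_n]\le C$ (which the same tail estimate already gives), obtaining $\mathbb{E}_u[e^{\zeta T_K(p)}]\le 1+u$ and then closing with the elementary inequality $(1+u)e^{-u}\le 1$; this is marginally more elementary and pins down $\lambda=1$ explicitly. Either route works, and the uniformity in $p$ --- which, as you note, is the only subtle point --- is in both cases supplied by Proposition~\ref{prop:Gbarx_CTCP1}.
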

\begin{proof}
Let $\mathcal{F}_n$ be the sigma-algebra generated by $V_0(p), V_1(p), \cdots, V_n(p)$. Define $Z_n = e^{\lambda p^{\frac{1-k}{2-k}}V_n^{1-k}}$, for $n \in \mathbb{N}$. Using \eqref{eqn:Vn_of_p}, we have
\begin{equation}
\begin{split}
Z_{n+1} &= e^{ \lambda p^{\frac{1-k}{2-k}} (1 - \beta)^{1-k} \Bigl( V_n^{1-k} + \alpha (1 -k) G_{V_n(p)}^p \Bigr)  } \\
&= Z_n e^{\lambda p^{\frac{1-k}{2-k}} ((1 - \beta)^{1-k} - 1)  V_n^{1-k}} e^{ \lambda p^{\frac{1-k}{2-k}} \alpha (1 -k) (1 - \beta)^{1-k} G_{V_n(p)}(p)}
\end{split}
\end{equation}
Therefore, on the event $E_n = \{ p^{\frac{1-k}{2-k}}V_n(p)^{1-k} > K \}$, the drift in the $\{Z_n\}$ process can be bounded as follows,
\begin{equation}
\begin{split}
\mathbb{E}[Z_{n+1} - Z_n | \mathcal{F}_n] & \leq Z_n \Bigl( e^{- \lambda_0 K} \mathbb{E}\Bigl[e^{ \lambda_1 p^{\frac{1-k}{2-k}} G_{V_n(p)}} \Big| E_n\Bigr] - 1 \Bigr), \\
& \leq Z_n  \Bigl( e^{- \lambda_0 K} \sup_{p \in (0,1); y \geq 0} \mathbb{E}\Bigl[\exp \bigl( \lambda_1 p^{\frac{1-k}{2-k}} G_{\floor{\frac{y}{p^{\frac{1}{2-k}}}}}\bigr)\Bigr] - 1 \Bigr), \\
\end{split}
\end{equation}
where $\lambda_0 = \lambda ((1 - \beta)^{1-k} - 1) $, $\lambda_1 = \lambda \alpha (1 -k) (1 - \beta)^{1-k} $.
From Proposition \ref{prop:Gbarx_CTCP1}, there exists a $\lambda_2 > 0$ such that for $0 < \lambda < \lambda_2$, there exists a finite $C$ such that
\begin{equation}
\mathbb{E}[Z_{n+1} - Z_n | \mathcal{F}_n]  \leq Z_n \Bigl(C e^{- \lambda_0 K}  - 1 \Bigr),
\end{equation}
on the event $E_n = \{ p^{\frac{1-k}{2-k}}V_n(p)^{1-k} > K \}$. We can now choose a $K_0$ such that for all $K \geq K_0$, there exists a $\eta < 1$ such that,
\begin{equation}
\mathbb{E}[Z_{n+1} | \mathcal{F}_n]  \leq \eta Z_n,
\end{equation}
on the event $E_n = \{ p^{\frac{1-k}{2-k}}V_n(p)^{1-k} > K \}$. Therefore if $p^{\frac{1-k}{2-k}}V_0(p)^{1-k} > K  $, the sequence $U_n = \eta^{- n \wedge T_K(p)} Z_{n \wedge T_K(p)}$ is a super-martingale. Hence for $n \geq 0$, we have
\begin{equation}
\mathbb{E} [\eta^{- n \wedge T_K(p)} Z_{n \wedge T_K(p)} | \mathcal{F}_0] = \mathbb{E}[U_n | \mathcal{F}_0] \leq U_0 = e^{\lambda p^{\frac{1-k}{2-k}}V_0(p)^{1-k}}.
\end{equation}
Since, $Z_n \geq 1$, we have
\begin{equation}
\mathbb{E} [\eta^{- n \wedge T_K(p)} ] \leq  e^{\lambda p^{\frac{1-k}{2-k}}V_0(p)^{1-k}},
\end{equation}
on the event $\{p^{\frac{1-k}{2-k}}V_0(p)^{1-k} > K  \}$. Now using Fatou's lemma, we get
\begin{equation}
\mathbb{E} [\eta^{T_K(p)} ] \leq  e^{\lambda p^{\frac{1-k}{2-k}}V_0(p)^{1-k}},
\end{equation}
on the event $\{p^{\frac{1-k}{2-k}}V_0(p)^{1-k} > K  \}$ which proves the desired result.
\end{proof}

\begin{proposition}
\label{prop:stat_dist_conv}
The invariant distribution of $p^{\frac{1}{2-k}}V_n(p)$ converges weakly to the invariant distribution of $\overline{V}_n$ as $p \rightarrow 0$.
\end{proposition}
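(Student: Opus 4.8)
The plan is to combine the regeneration/hitting-time structure from Proposition \ref{prop:convergence_hitting_times} with the uniform exponential control from Lemma \ref{lemma:hitting_time} to pass from convergence of the restricted-to-$[0,K]$ chains to convergence of the full invariant distributions. Fix $K \geq K_0$ as in Lemma \ref{lemma:hitting_time}. Using visits to $[0,K]$ as regeneration epochs, the invariant distribution $\pi_p$ of $p^{\frac{1-k}{2-k}}V_n(p)^{1-k}$ decomposes, via the regenerative representation, as an average over an excursion of the path started from $\pi_p^K$: for any bounded continuous $h$,
\begin{equation}
\label{eqn:reg_rep}
\int h \, d\pi_p = \frac{\mathbb{E}_{\pi_p^K}\Bigl[\sum_{n=0}^{T_K(p)-1} h(p^{\frac{1-k}{2-k}}V_n(p)^{1-k})\Bigr]}{\mathbb{E}_{\pi_p^K}[T_K(p)]},
\end{equation}
and likewise $\int h\, d\overline{\pi}$ with $\overline{\pi}^K$, $\overline{T}_K$ in place of $\pi_p^K$, $T_K(p)$. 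So it suffices to show numerator and denominator of \eqref{eqn:reg_rep} converge to their barred counterparts.

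First I would handle the denominator. By Proposition \ref{prop:convergence_hitting_times}, $\mathbb{P}_{\pi_p^K}(T_K(p)=n) \to \mathbb{P}_{\overline{\pi}^K}(\overline{T}_K=n)$ for every $n$, which is convergence in distribution of $T_K(p)$; to upgrade this to convergence of means I would invoke uniform integrability, which is exactly what Lemma \ref{lemma:hitting_time} provides: $\mathbb{E}_{\pi_p^K}[e^{\zeta T_K(p)}] = \mathbb{E}_{\pi_p^K}[e^{\zeta T_K(p) - \lambda p^{\frac{1-k}{2-k}}V_0(p)^{1-k}} e^{\lambda p^{\frac{1-k}{2-k}}V_0(p)^{1-k}}]$, and since $\pi_p^K$ is supported on $[0,K]$ the second factor is at most $e^{\lambda K}$, giving $\mathbb{E}_{\pi_p^K}[e^{\zeta T_K(p)}] \leq e^{\lambda K}$ uniformly in $p$. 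Hence $\{T_K(p)\}$ is uniformly integrable (indeed has uniformly bounded exponential moments), so $\mathbb{E}_{\pi_p^K}[T_K(p)] \to \mathbb{E}_{\overline{\pi}^K}[\overline{T}_K]$, which is finite and positive. For the numerator, write $\sum_{n=0}^{T_K(p)-1} h(\cdot) = \sum_{n\geq 0} h(p^{\frac{1-k}{2-k}}V_n(p)^{1-k})\mathds{1}_{\{T_K(p) > n\}}$; each term converges in law by Proposition \ref{prop:Vbarx_CTCP}/Corollary \ref{coro:Vbarx_CTCP} together with Proposition \ref{prop:convergence_hitting_times} (the event $\{T_K(p)>n\}$ is measurable with respect to the first $n$ coordinates, whose joint law converges, and $h$ is bounded), and the tail is controlled uniformly since $|\sum_{n\geq N} h \cdot \mathds{1}_{\{T_K(p)>n\}}| \leq \|h\|_\infty \sum_{n\geq N}\mathbb{P}_{\pi_p^K}(T_K(p)>n) \leq \|h\|_\infty e^{-\zeta N}\sup_p \mathbb{E}_{\pi_p^K}[e^{\zeta T_K(p)}] \leq \|h\|_\infty e^{\lambda K - \zeta N}$, which is small uniformly in $p$. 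Dominated convergence then gives convergence of the numerator, and dividing, $\int h\, d\pi_p \to \int h\, d\overline{\pi}$ for all bounded continuous $h$. Finally, since $x \mapsto x^{1/(1-k)}$ is continuous, weak convergence of the laws of $p^{\frac{1-k}{2-k}}V_n(p)^{1-k}$ transfers to weak convergence of the laws of $p^{\frac{1}{2-k}}V_n(p)$, which is the claim.

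The main obstacle I anticipate is the exchange of limit and infinite sum in the numerator — making rigorous that the excursion sums converge, which needs both the termwise convergence (a joint convergence statement, not just marginal, hence the reliance on Proposition \ref{prop:Vbarx_CTCP} giving convergence of all finite-dimensional distributions uniformly in the starting point) and a uniform-in-$p$ tail bound on the excursion length. The latter is precisely why Lemma \ref{lemma:hitting_time} is stated with an exponential moment rather than just finiteness of $\mathbb{E}[T_K(p)]$; a secondary subtlety is confirming the regenerative representation \eqref{eqn:reg_rep} is valid here (the chain $p^{\frac{1-k}{2-k}}V_n(p)^{1-k}$ is positive recurrent with $[0,K]$ hit in finite expected time by Corollary \ref{coro:ergodicity} and Lemma \ref{lemma:hitting_time}, so the ratio formula applies), and that $\pi_p^K$ — the normalized restriction of $\pi_p$ to $[0,K]$ — is indeed the correct entrance law for the excursions, which follows because $\pi_p$ is invariant.
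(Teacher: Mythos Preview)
Your overall strategy --- the regenerative ratio representation \eqref{eqn:reg_rep}, convergence of numerator and denominator separately, uniform integrability of $T_K(p)$ from Lemma \ref{lemma:hitting_time}, and the continuous mapping at the end --- is exactly the paper's approach. The decomposition of the numerator as $\sum_{n\geq 0} h(\cdot)\mathds{1}_{\{T_K(p)>n\}}$ with a geometric tail bound is a clean way to phrase what the paper does via truncation and uniform integrability.

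There is, however, a genuine gap in how you invoke Lemma \ref{lemma:hitting_time}. That lemma is stated \emph{conditionally on} $p^{\frac{1-k}{2-k}}V_0(p)^{1-k} > K$, i.e.\ for starting points \emph{outside} $[0,K]$. Under $\pi_p^K$ the chain starts \emph{inside} $[0,K]$, so you cannot apply the lemma directly to get $\mathbb{E}_{\pi_p^K}[e^{\zeta T_K(p)-\lambda p^{\frac{1-k}{2-k}}V_0(p)^{1-k}}]\leq 1$; the lemma simply does not cover that case. The paper fixes this by writing $T_K(p)=1$ on $\{p^{\frac{1-k}{2-k}}V_1(p)^{1-k}\leq K\}$ and $T_K(p)=1+T_K'(p)$ on the complement, where $T_K'(p)$ is the hitting time from $V_1(p)$ with $p^{\frac{1-k}{2-k}}V_1(p)^{1-k}>K$; Lemma \ref{lemma:hitting_time} then legitimately bounds $\mathbb{E}[e^{\zeta T_K'(p)}]$ by $e^{\lambda p^{\frac{1-k}{2-k}}V_1(p)^{1-k}}$, and one controls that factor using $p^{\frac{1-k}{2-k}}V_0(p)^{1-k}\leq K$ together with the uniform MGF bound on $p^{\frac{1-k}{2-k}}G^p$ from Proposition \ref{prop:Gbarx_CTCP1}. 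Once you insert this one-step decomposition, your argument goes through and matches the paper's.
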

\begin{proof}
We will first prove that the invariant distribution of $p^{\frac{1-k}{2-k}} V_n(p)^{1-k}$ converges to the invariant distribution of $\overline{V}_n^{1-k}$. The continuous mapping theorem then gives us the desired result. Suppose $p^{\frac{1-k}{2-k}} V_0(p)^{1-k} \in [0,K]$ and set its initial distribution as $\pi_p^K(\cdot) = \frac{\pi_p \mathds{1}_{K}(\cdot) }{\int_K \pi_p(x) dx}$, similarly let $\overline{V}_0^{1-k} \in [0,K]$ with initial distribution $\overline{\pi}^K(\cdot) = \frac{\overline{\pi}\mathds{1}_{K}(\cdot)}{ \int_K \overline{\pi}(x) dx } $. Using Palm calculus\cite{Asmussen}, for any bounded measurable function $f$ we have
\begin{equation}
\mathbb{E}_{\pi_p}[f] \defeq \int f d\pi_p = \frac{1}{\mathbb{E}_{\pi_p^K}[T_{K}(p)]} \mathbb{E}_{\pi_p^K} \Bigl[  \displaystyle{\sum\limits_{n =0}^{T_{K}(p) - 1}} f(p^{\frac{1-k}{2-k}} V_n(p)^{1-k} ) \Bigr],
\end{equation}
and
\begin{equation}
\mathbb{E}_{\overline{\pi}}[f] \defeq \int f d\overline{\pi} = \frac{1}{\mathbb{E}_{\overline{\pi}^K}[\overline{T}_{K}]} \mathbb{E}_{\overline{\pi}^K} \Bigl[  \displaystyle{\sum\limits_{n =0}^{\overline{T}_{K} - 1}} f(\overline{V}_n^{1-k} ) \Bigr].
\end{equation}

Now, $T_K(p) = 1$ if $ p^{\frac{1-k}{2-k}} V_1(p)^{1-k} \leq K $ and using Markov property, $T_K(p) = 1 + T'_K(p)$ if $ p^{\frac{1-k}{2-k}} V_1(p)^{1-k} > K $, where $T'_K(p)$ denotes hitting time to $[0,K]$ for Markov chain starting from $p^{\frac{1-k}{2-k}} V_1(p)^{1-k}$. Therefore we have,  
\begin{equation}
\begin{split}
\mathbb{E}_{\pi_p^K} \Bigl[ e^{\zeta T_K(p)} \Bigr] & \leq e^{\zeta} +  e^{\zeta} \mathbb{E}_{\pi_p^K}  \Bigl[ \mathbb{E}_{p^{\frac{1-k}{2-k}} V_1(p)^{1-k}} \Bigl[ e^{\zeta T_K'(p)} \Big| p^{\frac{1-k}{2-k}} V_1(p)^{1-k} > K \Bigr] \Bigr], \\
& = e^{\zeta} + e^{\zeta}\mathbb{E}_{\pi_p^K}   \Bigl[ e^{ \lambda p^{\frac{1-k}{2-k}} V_1(p)^{1-k}} \mathbb{E}_{p^{\frac{1-k}{2-k}} V_1(p)^{1-k}} \Bigl[ e^{\zeta T'_K(p)} \\
& \hspace*{3.0cm}  e^{- \lambda p^{\frac{1-k}{2-k}} V_1(p)^{1-k} }  \Big|  p^{\frac{1-k}{2-k}} V_1(p)^{1-k} > K \Bigr] \Bigr]
\end{split}
\end{equation}
Now from Lemma \ref{lemma:hitting_time}, with appropriate choice of $K$, $\zeta$, and $\lambda > 0$ we have,
\begin{equation}
\begin{split}
\mathbb{E}_{\pi_p^K} \Bigl[ e^{\zeta T_K(p)} \Bigr] & \leq  e^{\zeta} + e^{\zeta} \mathbb{E}_{\pi_p^K}   \Bigl[ e^{ \lambda p^{\frac{1-k}{2-k}} V_1(p)^{1-k}} \Bigr], \\
& =  e^{\zeta} + e^{\zeta} \mathbb{E}_{\pi_p^K}   \Bigl[ e^{ \lambda p^{\frac{1-k}{2-k}} (1 - \beta) \Bigl(V_0(p)^{1-k} + \alpha (1 - k) G_{V_0(p)}^p\Bigr) } \Bigr], \\
& \leq e^{\zeta} +  e^{\zeta} e^{ \lambda (1 - \beta) K} \mathbb{E}_{\pi_p^K}   \Bigl[ e^{ \lambda p^{\frac{1-k}{2-k}} \alpha (1 - k) G_{V_0(p)}^p } \Bigr],  
\end{split}
\end{equation}
since $p^{\frac{1-k}{2-k}} V_0(p)^{1-k} \in [0,K]$. Using Proposition \ref{prop:Gbarx_CTCP1}, with a choice of small enough $\lambda$, we see that $\mathbb{E}_{\pi_p^K} \Bigl[ e^{\zeta T_K(p)} \Bigr]$ is bounded for $p \in (0,1)$ for some $ \zeta > 0$. Thus with initial distribution $\pi_p^K$, the random variables $T_K(p)$ are uniformly integrable. From Proposition \ref{prop:convergence_hitting_times},  
\begin{equation}
\mathbb{P}_{\pi_p^K}(T_K(p) = n) \rightarrow \mathbb{P}_{\overline{\pi}^K}(\overline{T}_K = n),
\end{equation}
as $p \rightarrow 0$. Therefore, we have
\begin{equation}
\label{eqn:stat_dist_conv_Dr_WC}
\lim_{p \rightarrow 0} \mathbb{E}_{\pi_p^K}  [T_K(p)] =  \mathbb{E}_{\overline{\pi}^K}  [\overline{T}_K].
\end{equation}
Since $T_K(p)$ is uniformly integrable, given $\epsilon > 0$, there exists $C$ such that 
\begin{equation}
\mathbb{E}_{\pi_p^K}  [T_K(p) \mathds{1}_{T_K(p) > C}] \leq \epsilon, 
\end{equation}
for all $p \in (0,1)$. Also,
\begin{equation}
\mathbb{E}_{\overline{\pi}^K}  [\overline{T}_K \mathds{1}_{\overline{T}_K > C}] \leq \epsilon.
\end{equation}
Now using the above and the uniform convergence in equation \eqref{eqn:Vbarx_CTCP2} of Proposition \ref{prop:Vbarx_CTCP}, for any bounded continuous function $f$, we have 
\begin{equation}
\label{eqn:stat_dist_conv_Nr_WC}
\lim_{p \rightarrow 0} \mathbb{E}_{\pi_p^K} \Bigl[  \displaystyle{\sum\limits_{n =0}^{T_{K}(p) - 1}} f(p^{\frac{1-k}{2-k}} V_n(p)^{1-k} ) \Bigr] = \mathbb{E}_{\overline{\pi}^K} \Bigl[  \displaystyle{\sum\limits_{n =0}^{\overline{T}_{K} - 1}} f(\overline{V}_n^{1-k} ) \Bigr]
\end{equation} 
From \eqref{eqn:stat_dist_conv_Dr_WC} and \eqref{eqn:stat_dist_conv_Nr_WC}, we see that for any bounded continuous function $f$,
\begin{equation}
\lim_{p \rightarrow 0} \mathbb{E}_{\pi_p}[f] = \mathbb{E}_{\overline{\pi}}[f],
\end{equation}
this shows that the distribution $\pi_p$ converges weakly to $\overline{\pi}$ as $p \rightarrow 0$, which proves the desired result.
\end{proof}

Let $\mathbb{E}[W(p)]$ denote the time average window size. Let $V_{\infty}(p)$ is a random variable which has the same distribution as the invariant distribution of $V_n$ process. From Palm calculus, we have
\begin{equation}
\label{eqn:EW_Dumas_approx}
\mathbb{E}[W(p)] = \frac{1}{p \mathbb{E}[G_{V_{\infty}(p)}^p]},
\end{equation}
since the average number of packets sent between losses is $\frac{1}{p}$. Let $\overline{V}_{\infty} $ be a random variable which has the same distribution as the stationary distribution of the $\overline{V}_n$ process. The following result, gives us an approximation to the time average window size of TCP Compound for small packet error rates.
\begin{theorem}
\begin{equation}
\lim_{p \rightarrow 0} \mathbb{E}[p^{\frac{1}{2-k}} W(p)] = \frac{1}{\mathbb{E}[\overline{G}_{\overline{V}_{\infty}}]}.
\end{equation}
\end{theorem}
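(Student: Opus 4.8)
The plan is to reduce the statement to a convergence-of-means claim for a single scaled random variable, prove that claim by combining weak convergence with uniform integrability, and then take reciprocals. First I would rewrite \eqref{eqn:EW_Dumas_approx}: multiplying both sides by $p^{\frac{1}{2-k}}$ and using $\frac{p^{\frac{1}{2-k}}}{p} = p^{-\frac{1-k}{2-k}}$ gives, for every $p$,
\begin{equation}
\mathbb{E}[p^{\frac{1}{2-k}} W(p)] = \frac{1}{\mathbb{E}\bigl[p^{\frac{1-k}{2-k}} G^p_{V_\infty(p)}\bigr]}.
\end{equation}
Hence it suffices to prove that $\mathbb{E}[p^{\frac{1-k}{2-k}} G^p_{V_\infty(p)}] \to \mathbb{E}[\overline{G}_{\overline{V}_\infty}]$ as $p \to 0$ and that the limit lies in $(0,\infty)$; finiteness follows from the uniform exponential tail bound in Proposition \ref{prop:Gbarx_CTCP1} (which bounds $\mathbb{P}(\overline{G}_x \geq y)$ by $c_1 e^{-c_2 y}$ independently of $x$), and strict positivity from the fact that $\overline{G}_{\overline{V}_\infty} > 0$ almost surely.

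Write $U_p \defeq p^{\frac{1-k}{2-k}} G^p_{V_\infty(p)}$, $\overline{U} \defeq \overline{G}_{\overline{V}_\infty}$, and $X_p \defeq p^{\frac{1}{2-k}} V_\infty(p)$; by Proposition \ref{prop:stat_dist_conv} the laws of $X_p$ converge weakly to that of $\overline{V}_\infty$, which is supported on $(0,\infty)$. I would first show $U_p \Rightarrow \overline{U}$. Fix a bounded continuous $h$ and, conditioning on $V_\infty(p)$ (resp.\ $\overline{V}_\infty$), write $\mathbb{E}[h(U_p)] = \mathbb{E}[\psi_p(X_p)]$ and $\mathbb{E}[h(\overline{U})] = \mathbb{E}[\psi(\overline{V}_\infty)]$, where $\psi_p(x) \defeq \mathbb{E}[h(p^{\frac{1-k}{2-k}} G^p_{\floor{x/p^{\frac{1}{2-k}}}})]$ and $\psi(x) \defeq \mathbb{E}[h(\overline{G}_x)]$; the discrepancy between $V_\infty(p)$ and $\floor{X_p/p^{\frac{1}{2-k}}}$ is absorbed using the stochastic monotonicity of $G^p_x$ in $x$, since the two bracketing choices differ by one unit and hence share the same scaled limit. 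Proposition \ref{prop:Gbarx_CTCP1} supplies uniform convergence of the complementary cdfs on $\{x \leq M,\, y \leq M\}$ together with a uniform exponential tail, and these combine to give $\sup_{x \in [\epsilon, M]} |\psi_p(x) - \psi(x)| \to 0$; moreover the explicit formula \eqref{eqn:Gbarx_CTCP2} makes $\psi$ bounded and continuous on $(0,\infty)$. Splitting $\mathbb{E}|\psi_p(X_p) - \psi(X_p)|$ over $\{X_p \in [\epsilon, M]\}$ and its complement, bounding the complement by $2\|h\|_\infty \mathbb{P}(X_p \notin [\epsilon,M])$ and using $\mathbb{P}(X_p \notin [\epsilon,M]) \to \mathbb{P}(\overline{V}_\infty \notin (\epsilon,M))$ at continuity points (then $\epsilon \downarrow 0$, $M \uparrow \infty$), the first term vanishes; and $\mathbb{E}[\psi(X_p)] \to \mathbb{E}[\psi(\overline{V}_\infty)]$ by weak convergence. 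Thus $\mathbb{E}[h(U_p)] \to \mathbb{E}[h(\overline{U})]$ for all bounded continuous $h$, i.e.\ $U_p \Rightarrow \overline{U}$.

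It remains to upgrade this to convergence of means. Since $V_\infty(p)$ is bounded below by the minimal window size ($\geq 1$), $G^p_{V_\infty(p)}$ is stochastically dominated by $G^p_{1}$, so the last assertion of Proposition \ref{prop:Gbarx_CTCP1} — equivalently the bound $\mathbb{P}(U_p \geq y) \leq c_1 e^{-c_2 y}$ uniformly in $p$ — yields $\sup_p \mathbb{E}[U_p] < \infty$ and $\lim_{C \to \infty} \sup_p \mathbb{E}[U_p \mathds{1}_{U_p > C}] = 0$, i.e.\ $\{U_p\}$ is uniformly integrable. Weak convergence plus uniform integrability give $\mathbb{E}[U_p] \to \mathbb{E}[\overline{U}] = \mathbb{E}[\overline{G}_{\overline{V}_\infty}]$, and taking reciprocals completes the proof. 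I expect the main obstacle to be the composition step $U_p \Rightarrow \overline{U}$: Proposition \ref{prop:Gbarx_CTCP1} provides convergence that is uniform only over $x$ in compact sets, so one has to use the weak convergence of $X_p$ to confine it to such a set while controlling the (uniformly exponentially small) tail contribution outside it, and one must also carefully handle the rounding between $V_\infty(p)$ and $\floor{X_p/p^{\frac{1}{2-k}}}$ as well as the behaviour of $\overline{V}_\infty$ near $0$.
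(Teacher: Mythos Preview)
Your proposal is correct and follows essentially the same route as the paper: rewrite $\mathbb{E}[p^{\frac{1}{2-k}}W(p)]$ via \eqref{eqn:EW_Dumas_approx}, establish weak convergence of $p^{\frac{1-k}{2-k}}G^p_{V_\infty(p)}$ to $\overline{G}_{\overline{V}_\infty}$ by conditioning on $V_\infty(p)$, truncating to a compact set using Proposition~\ref{prop:stat_dist_conv}, and invoking the uniform convergence in Proposition~\ref{prop:Gbarx_CTCP1}, then upgrade to convergence of means via the uniform exponential tail bound. The only cosmetic difference is that the paper works directly with tail probabilities $\mathbb{P}(\cdot \geq y)$ rather than general bounded continuous test functions, and truncates only from above (since $V_\infty(p)\geq 1$ automatically gives $X_p\geq p^{\frac{1}{2-k}}$, which is exactly the lower limit in \eqref{eqn:Gbarx_CTCP3}); your worry about rounding is also moot because $X_p/p^{\frac{1}{2-k}}=V_\infty(p)$ exactly.
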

\begin{proof}
Using \eqref{eqn:EW_Dumas_approx}, we get
\begin{equation}
\mathbb{E}[p^{\frac{1}{2-k}} W(p)] = \frac{1}{\mathbb{E}[p^{\frac{1-k}{2-k}} G_{V_{\infty}(p)}^p]},
\end{equation}
We shall now prove that $p^{\frac{1-k}{2-k}} G_{V_{\infty}(p)}^p$  converges weakly to $\overline{G}_{\overline{V}_{\infty}}$.
\begin{equation}
\mathbb{P}( p^{\frac{1-k}{2-k}} G_{V_{\infty}(p)}^p \geq y ) = \sum_{x = p^{\frac{1}{2-k}}, 2p^{\frac{1}{2-k}}, \cdots} \mathbb{P}( p^{\frac{1-k}{2-k}} G_{\frac{x}{p^{\frac{1}{2-k}}}}^p \geq y ) \mathbb{P}(p^{\frac{1}{2-k}} V_{\infty}(p) = x).
\end{equation}
Let $M < \infty$ be such that $\mathbb{P}(\overline{V}_{\infty}\geq M) < \epsilon$. Since, $p^{\frac{1}{2-k}} V_{\infty}(p)$ converges weakly to $ \overline{V}_{\infty}$, for small enough $p$, we have $\mathbb{P}(p^{\frac{1}{2-k}}V_{\infty}(p) \geq M) \leq \epsilon$. Therefore,  for small enough $p$,
\begin{equation}
\mathbb{P}( p^{\frac{1-k}{2-k}} G_{V_{\infty}(p)}^p \geq y ) \approx_{\epsilon} \sum_{x = p^{\frac{1}{2-k}}}^{M}  \mathbb{P}( p^{\frac{1-k}{2-k}} G_{\frac{x}{p^{\frac{1}{2-k}}}}^p \geq y ) \mathbb{P}(p^{\frac{1}{2-k}} V_{\infty}(p) = x).
\end{equation}
Using the uniform convergence of $\mathbb{P}(p^{\frac{1-k}{2-k}} G_{\floor{\frac{x}{p^{\frac{1}{2-k}}}}}^p \geq y)$ to $\mathbb{P}(\overline{G}_x  \geq y)$ over the compact interval $[0,M]$ (see \eqref{eqn:Gbarx_CTCP3}), we have
\begin{equation}
\label{eqn:convergence_avg_Gx}
\begin{split}
\mathbb{P}( p^{\frac{1-k}{2-k}} G_{V_{\infty}(p)}^p \geq y ) &\approx_{2 \epsilon} \sum_{x = p^{\frac{1}{2-k}}}^{M}  \mathbb{P}( \overline{G}_x  \geq y ) \mathbb{P}(p^{\frac{1}{2-k}} V_{\infty}(p) = x), \\
&\approx_{3 \epsilon}  \sum_{x = p^{\frac{1}{2-k}}, 2p^{\frac{1}{2-k}}, \cdots }  \mathbb{P}( \overline{G}_x  \geq y ) \mathbb{P}(p^{\frac{1}{2-k}} V_{\infty}(p) = x),
\end{split}
\end{equation}
for all $p \leq p^*$ for some $p^* > 0$. Since, $\mathbb{P}(\overline{G}_x  \geq y)$ is a continuous, bounded function of $x$, the RHS of \eqref{eqn:convergence_avg_Gx} converges to $\int \mathbb{P}( \overline{G}_x  \geq y ) d\mathbb{P}_{\overline{V}_{\infty}}(x)$ as $p \rightarrow 0$.  Therefore, we get
\begin{equation}
\begin{split}
\lim_{p \rightarrow 0} \mathbb{P}( p^{\frac{1-k}{2-k}} G_{V_{\infty}(p)}^p \geq y ) &= \int \mathbb{P}( \overline{G}_x  \geq y ) d\mathbb{P}_{\overline{V}_{\infty}}(x),\\
&= \mathbb{P}( \overline{G}_{\overline{V}_{\infty}}  \geq y ).
\end{split}
\end{equation}
From Proposition \ref{prop:Gbarx_CTCP1}, $p^{\frac{1-k}{2-k}} G_{V_{\infty}(p)}^p$ is uniformly integrable. Therefore,
\begin{equation}
\lim_{p \rightarrow 0} \mathbb{E}( p^{\frac{1-k}{2-k}} G_{V_{\infty}(p)}^p) = \mathbb{E}( \overline{G}_{\overline{V}_{\infty}}).
\end{equation}
This proves the desired result.
\end{proof}
We can evaluate $\mathbb{E}[\overline{G}_{\overline{V}_{\infty}}]$ using Monte-Carlo simulations. In Figure \ref{fig:CTCP_Gbar_time_avg}, we illustrate simulations for evaluating $\sum_{i=1}^{n} \frac{\overline{G}_{\overline{V}_{i}}}{n}$ with initial conditions $\overline{V}_{0} =$, $0.0$, $0.1$, $2.0$ for TCP Compound with parameters used in \cite{Tan2006Infocom}. We see that in these cases, after $n > 100$, there is little change in $\sum_{i=1}^{n} \frac{\overline{G}_{\overline{V}_{i}}}{n}$. For the TCP Compound parameters as used above and using $n = 10000$, we get $\mathbb{E}[\overline{G}_{\overline{V}_{\infty}}] \approx 3.9002$. 
\begin{figure}
\centering
\includegraphics[scale=0.22]{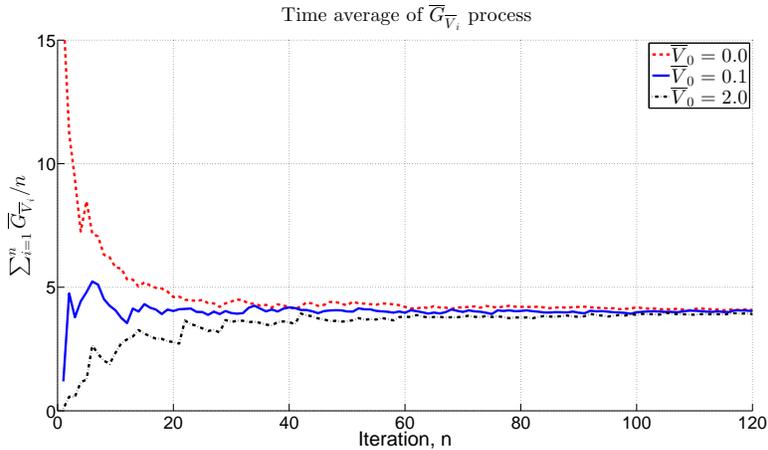}
\caption{Monte-Carlo simulations for $\mathbb{E}[\overline{G}_{\overline{V}_{\infty}}] \approx \sum_{i=1}^{n} \frac{\overline{G}_{\overline{V}_{i}}}{n}$}
\label{fig:CTCP_Gbar_time_avg}
\end{figure}
Therefore for small $p$, we can approximate the average window size as follows
\begin{equation}
\label{eqn:CTCP_avg_win_size}
\mathbb{E}[W(p)] = 0.2564p^{-\frac{1}{2-k}},
\end{equation}
and the throughput is given by $\lambda(p) = \frac{\mathbb{E}[W(p)]}{R}$.

An advantage of \eqref{eqn:CTCP_avg_win_size} over the results in \cite{Blanc2009} and \cite{Sudheer2013} is that now for given parameters, $\alpha$, $\beta$ of TCP Compound with one Monte-Carlo simulation we have a closed form expression of the average window size as a function of $p$ for all small $p$. It also shows the explicit dependence on $p$ and $k$.
\section{Simulation Results}
\label{sec:simulation_results}
In Tables \ref{tbl:ctcp_dumas_EW} and \ref{tbl:ctcp_dumas_G}, we compare the approximation obtained in Section \ref{sec:markovchain} with the fluid approximate model from \cite{Tan2006Infocom} (equation \eqref{eqn:compound_rf}) and our earlier Markov model in \cite{Sudheer2013}. In the ns2 simulations, the RTT was set to $0.1$ sec, the link speed was set to $10$ Gbps so that there is negligible queuing. The packet sizes were set to $1050$ bytes, the ns2 default value. For TCP Compound with fixed RTT and negligible queuing, the average window size, for the fluid model as well as the Markov chain model does not depend on the RTT. We see that the results obtained using the fluid model as well as the approximation developed in this paper show a good match with simulation. When compared against the ns2 simulations, all the models have similar accuracy. Most errors are less than $4\%$ for all the models. The Markov chain model of \cite{Sudheer2013} is marginally better.
\begin{table}
\centering
\caption{Average window size via different approximations}
\begin{tabular}{|c|c|c|c|c|}
\hline
per & $\mathbb{E}[W]$ & $\mathbb{E}[W]$ & $\mathbb{E}[W]$ & $\mathbb{E}[W]$ \\
\hline
$(p)$ & Simulations  & Det. Fluid & Markov chain & Approx. Markov \\
& (ns2) & \cite{Tan2006Infocom} & \cite{Sudheer2013} & $0.2570p^{-\frac{1}{2-k}}$ \\
\hline
$\num{1e-2}$ & $13.06$ & $10.16$ & $12.99$ & $10.23$ \\ 
\hline
$\num{5e-3}$ & $19.16$ & $17.69$ & $19.06$ & $17.81$ \\ 
$\num{3e-3}$ & $25.83$ & $26.63$ & $26.67$ & $26.80$ \\ 
$\num{1e-3}$ & $58.96$ & $64.12$ & $63.46$ & $64.54$ \\ 
\hline
$\num{8e-4}$ & $71.67$ & $76.66$ & $76.16$ & $77.16$ \\ 
$\num{5e-4}$ & $108.30$ & $111.65$ & $111.68$ & $112.38$ \\ 
$\num{3e-4}$ & $166.62$ & $168.01$ & $168.90$ & $169.11$ \\ 
$\num{1e-4}$ & $414.97$ & $404.60$ & $409.26$ & $407.25$ \\ 
\hline
$\num{8e-5}$ & $499.14$ & $483.67$ & $489.64$ & $486.85$ \\ 
$\num{5e-5}$ & $727.09$ & $704.45$ & $714.25$ & $709.07$ \\ 
$\num{3e-5}$ & $1115.37$ & $1060.05$ & $1076.51$ & $1067.01$ \\ 
\hline
\end{tabular}
\label{tbl:ctcp_dumas_EW}
\end{table}

\begin{table}
\centering
\caption{Goodput via different approximations}
\begin{tabular}{|c|c|c|c|c|}
\hline
per & Goodput & Goodput & Goodput & Goodput \\
\hline
$(p)$ & Simulations  & Det. Fluid & Markov chain & Approx. Markov \\
& (ns2) & \cite{Tan2006Infocom} & \cite{Sudheer2013} & $\frac{(1-p)0.2570p^{-\frac{1}{2-k}}}{R}$ \\
\hline
$\num{1e-2}$ & $128.84$ & $100.61$ & $128.61$ & $101.27$ \\
\hline
$\num{5e-3}$ & $189.94$ & $176.06$ & $189.62$ & $177.22$ \\
$\num{3e-3}$ & $257.36$ & $265.47$ & $265.92$ & $267.22$ \\
$\num{1e-3}$ & $588.88$ & $640.60$ & $633.95$ & $644.81$ \\
\hline
$\num{8e-4}$ & $715.88$ & $765.96$ & $761.02$ & $770.98$ \\
$\num{5e-4}$ & $1082.15$ & $1115.92$ & $1116.22$ & $1123.24$ \\
$\num{3e-4}$ & $1665.00$ & $1679.57$ & $1688.53$ & $1690.58$ \\
$\num{1e-4}$ & $4146.45$ & $4045.58$ & $4092.16$ & $4072.12$ \\
\hline
$\num{8e-5}$ & $4986.78$ & $4836.35$ & $4896.05$ & $4868.07$ \\
$\num{5e-5}$ & $7262.94$ & $7044.12$ & $7142.10$ & $7090.32$ \\
$\num{3e-5}$ & $11137.35$ & $10600.21$ & $10764.81$ & $10669.74$ \\
\hline
\end{tabular}
\label{tbl:ctcp_dumas_G}
\end{table}
\section{Conclusion}
\label{sec:conclusion}
In this paper, we have derived expressions for throughput of TCP Compound connections under random losses. We first looked at a deterministic loss model where a loss happens every $\frac{1}{p}$ packets, where $p$ is the packet error rate. We have shown that the window size process under this model has asymptotically periodic behaviour independent of the initial window size. This, theoretically justifies a fluid approximation for TCP Compound available in literature. We have then studied the system with random losses, which is more realistic, at least for wireless channels. We consider the window sizes at loss epochs. Using the insight from the fluid approximation, we have shown that the resulting Markov chains, indexed by $p$, when appropriately scaled converge to a limiting Markov chain as $p \rightarrow 0$. We have also shown that this Markov chain has a unique stationary distribution. This stationary distribution is then used to derive a closed from expression for TCP Compound average window size. We have compared our results with ns2 simulations and observed a good match between theory and simulations.
\begin{appendix}
\section{Appendix}
\label{app:appendixA}
\begin{proposition}
\label{prop:mgf_bound}
Suppose $\{X(p)\}$ is a family of positive random variables indexed by $p \in I$, for some $I \subset \mathbb{R}$ such that 
\begin{equation}
\mathbb{P}(X(p) \geq y) \leq c_1 \exp(-c_2 y),
\end{equation}
for all $y$, with $c_1 > 0$ and $c_2 > 0$. Then there exists $t>0$ such that for all $u \in (-t,t)$
\begin{equation}
\sup_{p \in I} \mathbb{E}[e^{uX(p)}] < \infty.
\end{equation}
\end{proposition}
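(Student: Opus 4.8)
The plan is to split into the cases $u \le 0$ and $0 < u < c_2$, the first being trivial and the second reducing to a single elementary integral whose value does not depend on $p$.

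First I would dispose of non-positive $u$: since each $X(p)$ is positive, for $u \le 0$ we have $e^{uX(p)} \le 1$ almost surely, hence $\mathbb{E}[e^{uX(p)}] \le 1$ for every $p \in I$. So the only work is for $u > 0$. Here I would use the layer-cake identity for a positive random variable: for $u > 0$,
\begin{equation}
\mathbb{E}[e^{uX(p)}] = 1 + u \int_{0}^{\infty} e^{uy} \, \mathbb{P}(X(p) > y) \, dy,
\end{equation}
which follows from Tonelli's theorem applied to the nonnegative integrand $e^{uy}\mathds{1}_{\{y < X(p)\}}$ (write $e^{uX(p)} = 1 + u\int_0^{X(p)} e^{uy}\,dy$ and take expectations). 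Plugging in the hypothesized tail bound $\mathbb{P}(X(p) > y) \le c_1 e^{-c_2 y}$ gives, for $0 < u < c_2$,
\begin{equation}
\mathbb{E}[e^{uX(p)}] \le 1 + u c_1 \int_{0}^{\infty} e^{-(c_2 - u)y} \, dy = 1 + \frac{u c_1}{c_2 - u},
\end{equation}
and the right-hand side is finite and independent of $p$.

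Finally I would take $t = c_2$ (or any $t \in (0, c_2]$): for every $u \in (-t, t)$ the two displays above furnish a bound on $\mathbb{E}[e^{uX(p)}]$ that is uniform over $p \in I$, namely $\max\{1,\, 1 + u c_1/(c_2 - u)\}$ for $u>0$ and $1$ for $u \le 0$, so $\sup_{p \in I} \mathbb{E}[e^{uX(p)}] < \infty$. There is no real obstacle here; the only point requiring a word of justification is the interchange of expectation and integral in the layer-cake identity, which is immediate by Tonelli since the integrand is nonnegative.
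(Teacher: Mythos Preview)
Your proof is correct and follows essentially the same approach as the paper: both treat $u\le 0$ trivially via $e^{uX(p)}\le 1$, and for $0<u<c_2$ both bound $\mathbb{E}[e^{uX(p)}]$ via a tail-integral (layer-cake) representation combined with the hypothesized exponential tail bound. The paper writes $\mathbb{E}[e^{uX(p)}]=\int_0^\infty \mathbb{P}(e^{uX(p)}\ge y)\,dy$ and then substitutes $y\mapsto \log y$ to reach $1+c_1\int_1^\infty y^{-c_2/u}\,dy$, which evaluates to exactly your bound $1+\dfrac{uc_1}{c_2-u}$; your version with $\mathbb{E}[e^{uX(p)}]=1+u\int_0^\infty e^{uy}\,\mathbb{P}(X(p)>y)\,dy$ simply skips that change of variable.
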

\begin{proof}
For any $u > 0$ and $p > 0$, 
\begin{equation}
\label{eqn:mgf_bound1}
\begin{split}
\mathbb{E}[e^{uX(p)}] &= \int_{0}^{\infty} \mathbb{P}(e^{uX(p)} \geq y) dy, \\
& \leq 1 + \int_{1}^{\infty} \mathbb{P}(uX(p) \geq \log(y)) dy, \\
& \leq 1 + \int_{1}^{\infty} c_1 \exp(- \frac{c_2}{u} \log(y)) dy, \\
& = 1 + \int_{1}^{\infty} \frac{c_1}{y^{\frac{c_2}{u}}} dy.
\end{split}
\end{equation}
For $u < c_2$, $\int_{1}^{\infty} \frac{c_1}{y^{\frac{c_2}{u}}} < \infty$. Therefore for all, $u \in [0,c_2)$, $\sup_{p \in I} \mathbb{E}[e^{uX(p)}] < \infty$. For $u<0$, $uX(p) \leq 0$ which implies $\sup_{p \in I} \mathbb{E}[e^{uX(p)}] \leq 1$, for $u < 0$. 
\end{proof}

\begin{proposition}
\label{prop:unif_conv}
Let $\{X_p(x), x \in \mathbb{R}^+\}$, be a process, for $0 < p < 1$, which converges to a limiting process $X(x)$ uniformly in the sense,
\begin{equation}
\label{eqn:unifconv_1}
\lim_{p \rightarrow 0} \sup_{x, y \leq M} \Bigl| \mathbb{P}(X_p(x) \leq y) - \mathbb{P}(X(x) \leq y) \Bigr| = 0,
\end{equation}
for any finite $M$, and for each $x$, the limiting distribution, $\mathbb{P}(X(x) \leq y) $ is continuous. Then,
\begin{equation}
\lim_{p \rightarrow 0} \sup_{x \leq M} \Bigl| \mathbb{E}f(X_p(x)) -  \mathbb{E}f(X(x)) \Bigr| = 0,
\end{equation}
for any $f:\mathbb{R^+}\rightarrow\mathbb{R}$ continuous with compact support.
\end{proposition}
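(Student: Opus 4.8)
The plan is to prove a uniform version of the portmanteau argument. Fix a test function $f$, continuous with $\mathrm{supp}(f)\subseteq[0,N]$ and $\|f\|_\infty=B<\infty$. The key idea is to approximate $f$ uniformly by a step function with finitely many levels: the number of levels and the sup-norm approximation error depend only on the \emph{global} modulus of continuity of $f$, not on $x$ or $p$, so the uniform-in-$(x,y)$ convergence of distribution functions in \eqref{eqn:unifconv_1} transfers directly to the expectations.

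Concretely, given $\epsilon>0$, I would use the uniform continuity of $f$ (it is continuous with compact support on $[0,\infty)$) to pick $\delta>0$ with $|f(s)-f(t)|<\epsilon$ whenever $|s-t|<\delta$, and choose a partition $y_0<y_1<\cdots<y_m$ with $y_0<0$, $y_m\geq N$, and $\max_j(y_j-y_{j-1})<\delta$. Set $\tilde f=\sum_{j=1}^{m} f(y_j)\,\mathds{1}_{(y_{j-1},y_j]}$, extended by $0$ on $(y_m,\infty)$. Then $\|f-\tilde f\|_\infty\leq\epsilon$ on $[0,\infty)$ (on $(y_{j-1},y_j]$ use uniform continuity; on $(y_m,\infty)$ both functions vanish; the point $0$ lies in the interior of the first nonempty subinterval because $y_0<0$), so for any nonnegative random variable $Z$ one has $|\mathbb{E}f(Z)-\mathbb{E}\tilde f(Z)|\leq\epsilon$, and hence
\[
\sup_{x\leq M}\bigl|\mathbb{E}f(X_p(x))-\mathbb{E}f(X(x))\bigr|\leq 2\epsilon+\sup_{x\leq M}\bigl|\mathbb{E}\tilde f(X_p(x))-\mathbb{E}\tilde f(X(x))\bigr|.
\]
Writing $F_p(x,y)=\mathbb{P}(X_p(x)\leq y)$ and $F(x,y)=\mathbb{P}(X(x)\leq y)$, the identity $\mathbb{E}\tilde f(X_p(x))=\sum_{j=1}^m f(y_j)\bigl(F_p(x,y_j)-F_p(x,y_{j-1})\bigr)$ and its analogue for $X(x)$ give, since each $y_i$ occurs in at most two summands,
\[
\sup_{x\leq M}\bigl|\mathbb{E}\tilde f(X_p(x))-\mathbb{E}\tilde f(X(x))\bigr|\leq 2B(m+1)\sup_{x\leq M',\,y\leq M'}\bigl|F_p(x,y)-F(x,y)\bigr|,
\]
with $M'=\max(M,y_m)$. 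Since $m$, $B$, and $M'$ are fixed once $\epsilon$ and $f$ are fixed, the right-hand side tends to $0$ as $p\to0$ by \eqref{eqn:unifconv_1}. Combining the two displays, $\limsup_{p\to0}\sup_{x\leq M}|\mathbb{E}f(X_p(x))-\mathbb{E}f(X(x))|\leq 2\epsilon$, and letting $\epsilon\downarrow0$ yields the claim.

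The argument is soft, so there is no serious obstacle; the only points requiring a little care are that a \emph{single} finite partition must serve for every $x\leq M$ simultaneously (which it does, since its mesh is governed solely by the modulus of continuity of $f$), and that a possible atom of $X_p(x)$ or $X(x)$ at $0$ be correctly captured (handled by taking $y_0<0$). The continuity of the limiting distribution $y\mapsto\mathbb{P}(X(x)\leq y)$ is not strictly needed for this step-function estimate, as \eqref{eqn:unifconv_1} already provides convergence at every $y$; it is the natural hypothesis if one instead prefers to first deduce $X_p(x)\Rightarrow X(x)$ for each fixed $x$ and then upgrade that to uniformity.
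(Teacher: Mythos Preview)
Your argument is correct and is essentially the same step-function/portmanteau approach as the paper's own proof: approximate $f$ uniformly by a finite-level simple function using its uniform continuity, then bound the difference of expectations by a finite sum of $|F_p(x,y_j)-F(x,y_j)|$ terms and invoke \eqref{eqn:unifconv_1}. Your version is slightly more careful (taking $y_0<0$ to capture a possible atom at $0$, and noting that the continuity hypothesis on the limit law is not actually used), but the underlying idea is identical.
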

\begin{proof}
Consider a continuous function $f$ with compact support, $[0,K]$. Such a function is uniformly continuous. Therefore, given any $\epsilon$, there exists $m$ points $u_0 = 0 < u_1 < \cdots <  u_m = K$, such that $$\sup_{u_i< y < u_{i+1}} |f(y) - f(u_i)| < \epsilon,$$ for all $i=1,2,\cdots, m$.
We have 
\begin{equation} 
\begin{split} 
E[f(X_p(x)] &= \int\limits_{0}^{K} f(u) \mathbb{P}(X_p(x) \in du) \\ 
&\approx_{\epsilon} \sum\limits_{i=1}^{m-1} f(u_i) \mathbb{P}(X_p(x) \in (u_i,u_{i+1}]), 
\end{split}
\end{equation} 
where $\approx_{\epsilon}$ indicates that the RHS and LHS are $\epsilon$-close to each other. Similarly,
\begin{equation} 
\begin{split} 
E[f(X(x)] \approx_{\epsilon} \sum\limits_{i=1}^{m-1} f(u_i) \mathbb{P}(X(x) \in (u_i,u_{i+1}]). 
\end{split}
\end{equation}
Therefore, 
\begin{equation} 
\begin{split} 
\Bigl|\mathbb{E}f(X_p(x) - \mathbb{E}f(X(x)) \Bigr| \approx_{2\epsilon} & \sum\limits_{i=1}^{m} f(u_i) \\
&\Bigl| \mathbb{P}(X_p(x) \in (u_i,u_{i+1}]) - \mathbb{P}(X(x) \in  (u_i,u_{i+1}])\Bigr|,  \\
\leq & \sum\limits_{i=1}^{m}  \parallel f \parallel_{\infty} \\ 
&|\mathbb{P}(X_p(x) \in (u_i,u_{i+1}]) - \mathbb{P}(X(x) \in (u_i,u_{i+1}])|,
\end{split}
\end{equation} 
where $\parallel f \parallel_{\infty} = \sup \{f(x): x \in [0,K] \}$. Since $f$ is continuous over a compact support, it is bounded and hence $ \parallel f \parallel_{\infty} < \infty$.
Therefore
\begin{equation} 
\begin{split} 
\lim_{p \rightarrow 0} & \sup_x \Bigl|\mathbb{E}f(X_p(x) - \mathbb{E}f(X(x)) \Bigr|  \\ 
& \leq  \lim_{p \rightarrow 0} \parallel f \parallel_{\infty} \sum\limits_{i=1}^{m} \sup_x \Bigl| \mathbb{P}_{p,x}((u_i,u_{i+1}]) - \mathbb{P}_{x}((u_i,u_{i+1}]) \Bigr| + 2 \epsilon, \\
& = 2 \epsilon.
\end{split}
\end{equation} 
The second inequality follows from the hypothesis \eqref{eqn:unifconv_1}. Since $\epsilon$ is arbitrary we get the desired result.
\end{proof}

\end{appendix}

\bibliographystyle{IEEEtran} 
\bibliography{tcp-references}
\end{document}